\acrodef{BS}{base station}
\acrodef{NOMA}{non-orthogonal multiple access}
\acrodef{OMA}{orthogonal multiple access}
\acrodef{RU}{resource unit}
\acrodef{SIC}{successive interference cancellation}
\acrodef{SIF}{standard interference function}
\acrodef{SINR}{signal-to-interference-and-noise ratio}
\acrodef{UE}{user equipment}
\patchcmd\Gread@eps{\@inputcheck#1 }{\@inputcheck"#1"\relax}{}{}
\DeclareMathAlphabet{\mathantt}{OT1}{antt}{li}{it}
\DeclareMathAlphabet{\mathpzc}{OT1}{pzc}{m}{it}
\DeclarePairedDelimiter\norm{\lVert}{\rVert}%
\newtheorem{theorem}{Theorem}
\newtheorem{corollary}{Corollary}
\newtheorem{lemma}{Lemma}
\newtheorem{definition}{Definition}
\def\B{\mathcal{B}}
\def\I{\mathcal{I}}
\def\J{\mathcal{J}}
\def\T{\mathcal{T}}
\def\u{\mathpzc{u}}
\def\U{\mathcal{U}}
\DeclareMathOperator{\argmin}{\arg\min}
\renewcommand{\vec}[1]{\bm{#1}}
\NewDocumentCommand{\overarrow}{O{=} O{\uparrow} m}{%
  \overset{\makebox[0pt]{\begin{tabular}{@{}c@{}}#3\\[0pt]\ensuremath{#2}\end{tabular}}}{#1}
}
\NewDocumentCommand{\underarrow}{O{=} O{\downarrow} m}{%
  \underset{\makebox[0pt]{\begin{tabular}{@{}c@{}}\ensuremath{#2}\\[0pt]#3\end{tabular}}}{#1}
}
\def\blfootnote{\gdef\@thefnmark{}\@footnotetext}
\definecolor{mycolor}{rgb}{0.59215686274, 0, 0.01960784313}
\begin{document}

\title{A Note on Decoding Order in User Grouping and Power Optimization for Multi-Cell NOMA with Load Coupling}

\author{

    \IEEEauthorblockN{Lei~You and Di~Yuan}\\
    \IEEEauthorblockA{Department of Information Technology, Uppsala University, Sweden
    \\\{lei.you; di.yuan\}@it.uu.se
    }
%


}

\maketitle

\begin{abstract}
  In this technical note, we present a new theoretical result for
  multi-cell non-orthogonal multiple access (NOMA).
  For multi-cell scenarios, a so-called load-coupling model has been
  proposed earlier to characterize the presence of mutual interference for
  NOMA, and the optimization process relies on the use of fixed-point
  iterations \cite{8254922,8353846} across cells. One difficulty here
  is that the order of decoding for successive interference
  cancellation (SIC) in NOMA is generally not known a priori. This is
  because the decoding order in one cell depends on interference,
  which, in turn, is governed by resource usage in other cells,
  and vice versa. To achieve convergence, previous works have used
  workarounds that pose restrictions to NOMA, such that the SIC
  decoding order remains throughout the fixed-point iterations.  As a comment to
  \cite{8254922,8353846}, we derive and prove the following result:
  The convergence is guaranteed, even if the order changes over the
  iterations. The result not only waives the need of previous
  workarounds, but also implies that a wide class of 
  optimization problems for multi-cell NOMA is tractable, as long as
  that for single cell is.
\end{abstract}
\begin{IEEEkeywords}
SIC, NOMA, interference, multi-cell
\end{IEEEkeywords}

\section{Introduction}
\label{sec:introduction}

\IEEEPARstart{N}{on-orthogonal} Multiple Access (NOMA) with \ac{SIC}
allows more than one user to share resource in the time-frequency
domain. With superposition coding, some users may perform SIC to remove (some of)
the intra-cell interference. The decoding needs to follow signal strength
so as to make SIC succeed. In the simplest case, the decoding order is
determined by the channel gains of users. NOMA in single-cell scenarios
has been widely addressed
\cite{6666209,7557079,8315498,7587811,7357604,7273963,7842433}. In
multi-cell NOMA, inter-cell interference has an influence on the
decoding order, which has to be accounted for
\cite{7676258,2016arXiv161101607S}. Besides, NOMA requires
\textit{user grouping} for resource sharing. The candidate options for
user grouping is exponential in the number of users. Third, power
allocation (a.k.a \textit{power split}) affects the performance.
Both user grouping and power split are intertwined with the decoding
order.

The decoding order in one cell depends on the interference from other
cells; the interference, in turn, depends on the allocated power and
time-frequency resources.  Increasing the resource consumption for
data transmission in one cell implies more interference to the users located in
other cells.  One approach for multi-cell optimization for \ac{OMA}
consists of fixed point iterations for the so-called \textit{load
  coupling} equation system
\cite{7959870,7132788,5198628,6204009,6732895,7585124,7880696,5450287,5489842,6292896,6363999,6479364,6747283,6924853,6815652,7151124,6887352,7480379,7273956,7332797,7962728,7744690,DBLP:journals/corr/abs-1710-09318}.
In every iteration, the resource allocation of one cells is computed,
with the allocation in other cells temporarily being fixed. At
convergence, an equilibrium with respect to resource allocation and
the resulting interference is obtained.  However, in NOMA, applying
the type of iterative method is challenging, since the decoding order
is not known beforehand, but subject to change during the iterative
process.  Some references
\cite{cui2018qoe,elbamby2017resource,baidas2018matching,kudathanthirige2019noma,zeng2019green}
do not explicitly address the interaction between decoding order and
interference. To the best of our knowledge,
\cite{7964738,8254922,8353846} consider this type of dependency
in multi-cell NOMA. In~\cite{7964738}, the
authors investigate multi-cell NOMA power control without user
grouping. References \cite{8254922,8353846} have used workarounds such
that some pre-computed decoding order remains NOMA-compliant in
optimization, which, on the other hand, poses limitations to the
applicable scenarios. To be specific, \cite{7964738} requires that
there is only one candidate group consisting of all users, and
\cite{8254922,8353846} require: 1) there are up to two users in each
group and 2) the candidate groups are selected such that the
NOMA-compliant decoding order can be pre-determined no matter the
interference. Without these conditions, neither the convergence nor
the optimality of their proposed algorithm is guaranteed.

This technical note serves as a comment to \cite{8254922,8353846},  though our main results are not necessarily bound to the specific system setups in \cite{8254922} and \cite{8353846}. The contributions are:

\begin{itemize}
\item We show a general conclusion with respect to the formulation of multi-cell NOMA optimization problems, such that the decoding order needs not to be explicitly ensured by constraints.
\item We use a so-called load-coupling model as an example, to showcase our conclusion.
The load coupling model has been widely adopted in OMA scenarios \cite{7959870,7132788,5198628,6204009,6732895,7585124,7880696,5450287,5489842,6292896,6363999,6479364,6747283,6924853,6815652,7151124,6887352,7480379,7273956,7332797,7962728,7744690,DBLP:journals/corr/abs-1710-09318} and has been extended to NOMA in \cite{8254922,8353846}. 
\item We formally proved that, the convergence and the optimality are guaranteed without imposing the limitations on the candidate user groups, even if the decoding order, due to variable inter-cell interference, changes from one iteration to the next in a fixed-point method. Furthermore, the result implies that a wide class of optimization problems for multi-cell NOMA is tractable as long as that for single cell is.
\end{itemize}

We clarify that in this technical note, the term ``\textit{correct
  decoding order}'', unless otherwise stated, always refers to the
NOMA-compliant decoding order. That is a receiver decodes the signals
successively in descending order of signal strengths.  This decoding
order in NOMA is based on the (valid) assumption that a user with
better channel condition (strong user) is able to decode the signal
transmitted to another user located in the same cell with worse
channel condition (weak user), as long as the weak user can decode the
signal of its own. In order words, the data rate encoded in the signal
to the weak user is permitted by the signal-to-interference--and-noise
ratio (SINR) of the weak user. Since the strong user can receive the
signal intended for the weak user better than the weak user itself,
the strong user can decode this signal as well, as the weak user can
do it (see, e.g., \cite{tse2005fundamentals}, Chapter~6.2.2, pp.~279
for further details).

Following up the above discussion, we remark that it is possible to
intentionally lower the rate transmitted to a user, in particular a
strong user, such that other, weak users can decode that signal and
perform interference cancellation. Such schemes have been studied in,
for example, \cite{AnChYu11,ZhTuTiBe14}. However, to our knowledge,
the literature of NOMA assume that the rate is set to what is maximumly
permitted by the SINR, which is the assumption in the current note as well.

\section{System Model}
\label{sec:sys_mod}

\subsection{Preliminaries}

Denote by $\I=\{1,2,\ldots,n\}$ the set of cells, and $\J$ the set of
\acp{UE}. We consider downlink and use $g_{ij}$ to denote the gain
from cell $i$ to \ac{UE} $j$.  For each cell $i$ ($i\in\I$), denote by
$\J_i$ the set of \acp{UE} located in the cell. Denote by $d_j$ the
data demand of \ac{UE} $j$ ($j\in\J$). Denote by $p_i$ the
transmission power of cell $i$ on each \ac{RU}. By using \ac{SIC},
multiple \acp{UE} of a cell can access one \ac{RU} simultaneously, with $p_i$
split among these \acp{UE}. We refer to the \acp{UE} sharing the same
\acp{RU} as a \textit{group}, and the process of selecting \acp{UE} to
form groups as \textit{user grouping}. We use $\u$ to refer to a
generic group. For any \ac{UE} $j\in\u$, denote by $q_{j\u}$ the
portion of power $p_i$ used for \ac{UE} $j$ on each \ac{RU} used by
 group $\u$. For cell $i$ ($i\in\I$), denote by $\U_i$ the set of
all groups of \acp{UE} in $\J_i$. In analogy with this, we use $\U_j$
to refer to the set of all groups that \ac{UE} $j$ belongs to. In
order to keep generality, we allow also singleton group $\u$. In this
case, the \ac{UE} in the group does not share \ac{RU} with others (and
hence no \ac{SIC}), i.e., the \ac{UE} is in
OMA. Besides, we allow one UE to belong to one or multiple groups such
that groups may have overlapping UEs. There is no limitation on the
number of UEs in one group. Both power split and user grouping are
subject to optimization.

In our derivations of the theoretical results, the channel information
is known. This is justified in research addressing the achievable
performance of a system. In our case, we target providing theoretical
results that are useful for computing the optimal performance of NOMA,
achieved with known channel information, and thereby assessing
accurately the potential benefit of NOMA in comparison to OMA.  In
practice, imperfect channel estimation will impact of the SIC
performance, even though SIC is more robust than parallel interference
cancellation (PIC) against error propagation~\cite{KoBoCa01}.
Moreover, we remark that there are several schemes for mitigating the
issue of propagation error for SIC, including soft-in soft-out (SISO)
decoding~\cite{KoBoCa01}, ordered SIC~\cite{Ki06}, and multi-feedback
and multi-branch SIC~\cite{Fala11,PeLaFa11}.

\subsection{NOMA with SIC}
\label{subsec:noma_sic}

In the following, we consider a generic UE $j$
located in cell $i$, i.e., $j \in \J_i$.
The \ac{SINR} of \ac{UE} $j$ in
group $\u$, denoted by $\gamma_{j\u}$, is given below.

\begin{equation}
\gamma_{j\u} = 	\frac{q_{j\u}g_{ij}}
{\underbrace{\sum_{h\in\u}
q_{h\u}g_{ij}\theta_{hj}}_{\textnormal{intra-cell}} 
+\underbrace{\sum_{k\in\I\backslash\{i\}}
   p_{k}g_{kj}\rho_{k}
}_{\textnormal{inter-cell}}
      +\sigma^2
},~{\Large \substack{ j\in\u \\ \\ \u\in\U_i}}
\label{eq:gamma}
\end{equation}

The denominator of $\gamma_{j\u}$ consists of three parts: intra-cell
interference, inter-cell interference, and noise power $\sigma^2$. The
optimization variables are the power split $q_{j\u}$, the
decoding order indicator $\theta_{hj}$ (discussed below), and the
cell-level resource allocation $\rho_{k}$ that is used as a scaling
parameter for the inter-cell interference (discussed below in
Section~\ref{subsec:load_coupling}). For each group $\u$, note that a
\ac{UE} $j$ of this group decodes the data of \acp{UE} with stronger
signal in $\u$, and the signals transmitted to other \acp{UE} of $\u$
constitute interference at $j$. We use $\theta_{hj}$ as a binary
indicator: $\theta_{hj}=1$ if and only if the signal transmission to
\ac{UE} $h$ is interference to \ac{UE} $j$, and $\theta_{hj}=0$ if and
only if \ac{UE} $j$ can decode the signal of UE $h$. As a convention,
$\theta_{hj}=0$ if $h=j$. We remark that $\theta_{hj}$ is subject to
the correct decoding order, which is determined by the channel condition
\cite{tse2005fundamentals}.  For any UE $j$, define

\begin{equation}
w_j = 	\left(\sum_{k\in\I\backslash\{i\}}
   p_{k}g_{kj}\rho_{k} + \sigma^2\middle)\right/g_{ij}.
\label{eq:w}
\end{equation}

For any user $j$, entity $w_j$ is the amount of inter-cell
interference and noise in relation to $j$'s own signal gain. Hence a
higher value of this entity means poorer channel condition.  For two
users $h$ and $j$ of the same cell, $h$ is the strong user if $w_h
\leq w_j$. Recall that, in NOMA, a strong user can decode the signal
of a weak user, as long as the weak user can decode the signal (of
which the rate is set to be that permitted by the weak user's SINR).
Thus, $\theta_{hj}=1$ if and only if $w_h\leq w_j$, i.e., UE $h$ has
better channel condition than UE $j$. We remark that in
\cite{tse2005fundamentals}, SIC is discussed for users of a single
cell. The result remains applicable here, because even if our scenario
consists in multiple cells, NOMA with SIC is used for each individual
cell, whereas inter-cell interference is treated in the same way as
noise. In addition, as remarked earlier, same as in \cite{tse2005fundamentals},
rate adaptation is not present in NOMA.

In case $w_h= w_j$, we assume there is a pre-defined convention to
break the tie. One possibility is to follow user index, e.g.,
$\theta_{hj}=1$ if $w_h= w_j$ and $h<j$.  As a result, for given inter-cell interference,
the NOMA decoding order is unique.

\subsection{Cell Load Coupling}
\label{subsec:load_coupling}

We define $c_{j\u}$ as the achievable capacity for UE $j$ in group $\u$ on one RU, namely,

\begin{equation}
\label{eq:cju}
c_{j\u} = \log(1+\gamma_{j\u}) = \log\left(1+\frac{q_{j\u}}{\sum_{h\in\u}q_{h\u}\theta_{hj}+w_j}\right).
\end{equation}

To ease the presentation of the mathematical proofs, in the theoretical derivations
we use the natural logarithm, and hence the information unit is
nat~\cite{ISO08} rather than bit. Thus, to be consistent, the data
demand $d_j, j \in \J$, are given in nats. 

The load coupling model defines $\rho_i$ to be the load of cell $i$,
which is the proportion of cell $i$'s time-frequency
resource that have been allocated for data transmission. 
Denote by $x_{\u}$ the proportion of \acp{RU} allocated to
group $\u$. We have $\rho_{i} = \sum_{\u\in\U_i}x_{\u}$. 
From an interference point of view, higher 
$\rho_i$ means that cell $i$ generates more interference to others,
hence cell load has been used as an interference scaling parameter
\cite{7959870,7132788,5198628,6204009,6732895,7585124,7880696,5450287,5489842,6292896,6363999,6479364,6747283,6924853,6815652,7151124,6887352,7480379,7273956,7332797,7962728,7744690,DBLP:journals/corr/abs-1710-09318,8254922,8353846},
see \eqref{eq:gamma}. We remark that, by~\eqref{eq:w}, for any UE $j$
in cell $i$, $w_j$ changes with other cells' resource
allocation $\rho_k$ ($k\in\I\backslash\{i\}$). Hence the decoding
order depends on the network-wide resource allocation.

We use $M$ and $B$ to represent the total number of \acp{RU} in one cell and the bandwidth of each \ac{RU}, respectively. To satisfy UE $j$'s demand $d_j$, we have
\begin{equation}
\sum_{\u\in\U_j}MBc_{j\u}x_{\u}\geq d_{j}, j\in\J,
\label{eq:dj}
\end{equation}
imposing that $d_j$ is satisfied by the sum of the demand delivered to UE $j$ over all groups in $\U_j$. In the discussion below, we use normalized $d_j$ such that the notation $M$ and $B$ are not necessary in our presentation.

\section{Problem Formulation}
\label{sec:problem}

\subsection{Mathematical Formulation}

We consider UE grouping and power split in NOMA, subject to inter-cell
interference that is modelled using the load-coupling model, as
outlined in the previous section. The cost function is defined with
respect to the amount of RUs needed to meet the user demand, i.e.,
cell load as defined earlier.  More specifically, we address power
split $\vec{q}$, group-level resource allocation $\vec{x}$, 
and cell-level resource allocation $\bm{\rho}$.
Moreover, as mentioned in Section~\ref{subsec:noma_sic}, $\bm{\theta}$
is the decoding order indicator that in turn depends on cell load.
The objective function $F$ is a generic cost function of the cell
loads (i.e. time-frequency resource usage of cells) and $F$ is
monotonically increasing in $\rho_1,\rho_2,\ldots,\rho_n$
element-wisely. The optimization formulation is given in~\eqref{eq:minF} below.
Note that in the formulation, the rate of a UE equals what is 
permitted by the SINR, where $\bm{\theta}$ is present for 
modeling intra-cell interference. In other words,
rate adaptation/optimization in SIC is not part of the model.

\begin{figure}[!ht]
\begin{subequations}
\begin{align}
& \min_{\substack{\vec{q},\vec{x},\bm{\rho},\vec{w}\geq\vec{0} \\ \bm{\theta}\in\{0,1\}}} F(\rho_1,\rho_2,\ldots,\rho_n) \\	
                     \text{s.t.} \quad & \sum_{\u\in\U_j}\log\left(1+\frac{q_{j\u}}{\sum_{h\in\u}q_{h\u}\theta_{hj}+w_j}\right) x_{\u} \geq d_j,~ j \in \J \label{eq:minF-d_j} \\
                     & w_j = \frac{\sum_{k\in\I\backslash\{i\}} p_{k}g_{kj}\rho_{k} + \sigma^2}{g_{ij}},~j\in\J_i,~i \in \I \label{eq:minF-w_j} \\
                     & \sum_{j\in\u}q_{j\u} \leq p_i,~\u\in\U_i,~i \in \I \label{eq:minF-p_i} \\
                     & \rho_i = \sum_{\u\in\U_i} x_{\u},~ i \in \I \label{eq:minF-rho_i} \\
                     &  w_j > w_h  \vee (w_j = w_h \land h < j) \Rightarrow  \theta_{hj}=1, \nonumber \\
& ~~~~~~~~~~~h \neq j, h, j \in \u, \u \in \U_i, i \in \I  \label{eq:minF-w_h-w_j} \\ 
                     & \theta_{hj} + \theta_{jh} = 1,~h\neq j,~h,j\in\u,\u \in \U_i, i \in \I \label{eq:minF-theta_hj-theta_jh} \\ 
                     & \theta_{hj}\in\{0,1\},~h\neq j,~h,j\in\u,\u \in \U_i, i \in \I  \label{eq:minF-theta_hj}
\end{align}
\label{eq:minF}
\end{subequations}
\end{figure}

The user demand constraints are~\eqref{eq:minF-d_j} and
\eqref{eq:minF-w_j}\footnote{With $d_j$ being normalized by $M\times
  B$ in this formulation, one can refer to \eqref{eq:gamma} and
  \eqref{eq:w} to verify that constraints~\eqref{eq:minF-d_j} along
  with~\eqref{eq:minF-w_j} are equivalent to~\eqref{eq:dj} in Section
  \ref{subsec:load_coupling}.}.  Constraints~\eqref{eq:minF-p_i}
impose the power limit, and equations ~\eqref{eq:minF-rho_i} define
cell load.
Constraints~\eqref{eq:minF-w_h-w_j}--\eqref{eq:minF-theta_hj} are for
the decoding order. Specifically, 
we have $\theta_{hj}=1$ if $w_h<w_j$ or if $h<j$ in case of a tie,
otherwise $\theta_{hj}=0$, following 
the rule of the correct decoding order in
Section~\ref{subsec:noma_sic}. 
Note that the decoding order depends on the cell level resource
allocation, i.e., $\rho_1,\rho_2,\ldots,\rho_n$. Moreover, as will be clear
later, stating \eqref{eq:minF-w_h-w_j} in logic form does not have any impact
on the theoretical results.

We remark that in reality there is an upper limit $\bar \rho$ for cell load, with the constraint
$\rho_i \leq \bar{\rho},~ i \in \I$. As will be shown later, this constraint can be accounted
for in post-processing.
Moreover, if necessary, user group selection constraints can be added to~\eqref{eq:minF}, and our conclusion 
in this paper still holds.

\subsection{Obstacles of Solving \eqref{eq:minF}}

We remark that~\eqref{eq:minF} is highly non-linear. In addition, a
major obstacle for some iterative algorithms for~\eqref{eq:minF} is
that, the variation of resource allocation in each iteration leads to
the change of decoding order for each group.  There is an algorithmic
framework derived in \cite{8254922,8353846}, which uses a top-down
paradigm (detailed in Section~\ref{subsec:solution}). The basic idea
is to break down~\eqref{eq:minF} into single cell level and then solve
the single-cell problems iteratively. For each iteration, there is an
inner loop over the cells. This inner loop can be performed in
parallel or sequentially. In the former case, the optimized resource
allocation of all cells serves as the input of the next iteration. In
the latter case, the optimized resource allocation of one cell updates
this cell's interference in the input, when the subproblem of another
cell is solved. Then, by fixed-point theory, the authors proved the
convergence of the algorithm, as well as the optimality of the
solution at convergence.

However, the algorithmic framework relies on two restrictions of
candidate groups, see\cite[Lemma 1]{8254922} and \cite[Lemma
1]{8353846}, respectively: Only those groups of which the decoding
orders can be pre-determined, are considered for optimization. The
other groups are eliminated from $\U$.  The limitation states that the
decoding orders of all the candidate groups must be independent of the
inter-cell interference such that the orders remain correct all the time, resulting
in sub-optimality.  To have a high probability of forming such groups,
\cite[Lemma 1]{8254922} and \cite[Lemma 1]{8353846} require each group
to consist of up to two UEs.

(\textit{Open Problem)} We remark that, if the restrictions are dropped, then in each iteration, the variation of cell loads may lead to the change of decoding order. In this case, neither convergence nor optimality is known.

\subsection{Solution of \cite{8254922,8353846} for \eqref{eq:minF}}
\label{subsec:solution}

By considering only groups for which the decoding order is independent of
interference, the variable $\bm{\theta}$ along with
\eqref{eq:minF-w_h-w_j}--\eqref{eq:minF-theta_hj} can be dropped
from~\eqref{eq:minF}, since $\bm{\theta}$ is pre-determined in
this special case.  The algorithmic framework in
\cite{8254922,8353846} is detailed as follows. Consider any cell $i$,
one can define the single-cell load minimization problem as a function
of the other cells' loads
$\bm{\rho}_{-i}=[\rho_1,\rho_2,\ldots,\rho_{i-1},\rho_{i+1},\ldots,\rho_n]$,
denoted by $\phi_i$:
\begin{equation}
\phi_i(\bm{\rho}_{-i}) = \min_{\vec{q},\vec{x},\vec{w}}\rho_i\text{ s.t. }\eqref{eq:minF-d_j}\text{--}\eqref{eq:minF-rho_i}~\text{of cell }i
\label{eq:fi}
\end{equation}

The authors proved that solving \eqref{eq:minF} amounts to obtaining the fixed point of
vector of functions $\bm{\phi} = [\phi_1, \phi_2, \dots \phi_n]$, that is, to solve
$\bm{\rho} = \bm{\phi}(\bm{\rho})$.  To be specific, the authors first
proved that $\phi_i$ ($i\in\I$) is a standard interference function
(SIF) \cite{414651}, of which the definition is given below.

\begin{definition}
Any function $\phi(\bm{\rho})$ that has the following two properties is an SIF, where $\bm{\rho}$ and $\bm{\rho}'$ are two arbitrary non-negative vectors with $\bm{\rho} \geq \bm{\rho}'$.
\begin{enumerate} 
\item (Scalability)
$\alpha \phi(\bm{\rho})>\phi(\alpha\bm{\rho})\text{~for any~} \alpha>1$.
\item (Monotonicity) $\phi(\bm{\rho})\geq \phi(\bm{\rho}')$.
\end{enumerate} 
\label{def:sif}
\end{definition}
Based on the fact that $\phi_i(\bm{\rho}_{-i})$ ($i\in\I$) is SIF, one
can obtain the unique fixed point $\bm{\rho}^{*}$ with
$\bm{\rho}^{*}=\bm{\phi}(\bm{\rho}^{*})$, by fixed-point iterations on
$\bm{\phi}$ \cite{414651}. Namely, for the iterative process
$\bm{\rho}^{(k+1)}=\bm{\phi}(\bm{\rho}^{(k)})$ ($k\geq 0$), we have
$\lim_{k\rightarrow\infty}\bm{\rho}^{(k)}=\bm{\rho}^{*}$, for
arbitrary non-negative starting point $\bm{\rho}^{(0)}$.  At
convergence of the fixed-point iterations, $\bm{\rho}^{*}$ along with
the other variables $\vec{q},\vec{x},\vec{w}$ that are obtained by
solving~\eqref{eq:fi} at $\bm{\rho}^{*}_{-i}$ for all $i\in\I$, is
optimal to~\eqref{eq:minF}.

\section{Results}
\label{sec:main_results}

This section derives our theoretical results, which give the answer to
the open problem in Section~\ref{sec:problem}. Our main conclusion is
that \cite[Lemma 1]{8254922} and \cite[Lemma 1]{8353846} can be
dropped, without loss of optimality or convergence of the proposed
solution methods. To show this, we first prove a general conclusion in
Section~\ref{subsec:rate-region} that is not tied to the load coupling
system. The conclusion states that, even if algebraically one allows
the capacity formula $c_{j\u}=\log(1+\gamma_{j\u})$ with ``decoding
orders'' in $\u$ to be all possible permutations of UEs of the group, the correct
decoding order leads to the largest $c_{j\u}$. Based on this, we prove
in Section~\ref{subsec:convergence} the convergence of the solution
methods. We then show the optimality after the convergence proof.

\subsection{Pseudo Rate Region}
\label{subsec:rate-region}

Consider \textit{rate region} at the RU level.  For a generic cell and
a user group $\u$ under consideration, we use $\bm{\theta}^{*}$ to
refer to the correct decoding order, i.e., the decoding order used by
NOMA for the users in $\u$. Note that $\bm{\theta}^{*}$ differs by user
group, and for a given group, $\bm{\theta}^{*}$ will change from one
iteration to another, when applying fixed-point iterations in solving
\eqref{eq:minF}.  However, within any iteration, when considering a
specific group, the amount of interference is given, and hence
$\bm{\theta}^{*}$ is easily determined as below, where entities $w_h$ and
$w_j$ contain the interference terms. For readability, we do not 
put cell, group, or iteration index on $\bm{\theta}^{*}$.

\[
\theta^{*}_{hj}=1\text{ iff } w_j > w_h  \text{ or } (w_j = w_h \text{ and } h < j), ~h \not=j, h, j \in \u.
\]

Suppose there are $K$ ($K\geq 2$) UEs multiplexed on an RU.  The UEs
are now indexed by following the correct decoding order as defined
above.  That is, UE $1$ decodes UEs $2,\ldots,K$.  UE $2$ has the
signal intended for UE $1$ as interference, and decodes UEs
$3,\ldots,K$, and so on.  The capacity of \ac{UE} $j$
($j=1,2,\ldots,K$), denoted by $c_{j}$, is
\[
c_{j}=\log\left(1+\frac{q_{j}}{\sum_{h=1}^{j-1}q_{h}+w_j}\right).
\]

Considering RU power limit $p$, the power split constraint reads 
\begin{equation}
\sum_{j=1}^{K}q_{j}\leq p.
\label{eq:p-max}
\end{equation}

The rates of UEs $1,2,\ldots K$ are as follows.

\begin{alignat*}{2}
& c_{1} = \log\left(1+ \frac{q_{1}}{w_1} \right)\\ 
& c_{2} = \log \left( 1+\frac{q_{2}}{q_{1}+w_2} \right)\\ 
& \vdots \\
& c_{K} = \log \left(1+\frac{q_{K}}{\sum_{h=1}^{K-1}q_{h}+w_K}\right)
\end{alignat*}\\
\label{eq:minF2}
\!\!For UE $1$, we have
\[
c_{1} = \log\left(1+\frac{q_{1}}{w_1}\right)	\Rightarrow q_{1} = w_{1}e^{c_{1}}-w_1
\]
For UE $2$, we have
\begin{multline*}
c_{2} = \log\left(1+\frac{q_{2}}{q_{1}+w_2}\right) \\ \Rightarrow q_{1}+q_{2} = w_1 e^{c_{1}+c_{2}} + (w_2-w_1)e^{c_{2}} - w_2	
\end{multline*}
By successively applying the same formula until the last user $K$, we obtain the equation below, where $w_0 = 0$.
\begin{equation}
R_{\bm{\theta}^*}(\vec{c})= \sum_{j\in\u} q_{j\u} = \sum_{t=1}^{K}\left(w_t-w_{t-1}\right)e^{\sum_{k=t}^{K}c_{t}} - w_{K}
\label{eq:A}
\end{equation}
where $\vec{c}=[c_{1},c_{2},\ldots,c_{K}]$, and $\bm{\theta}^{*}$ indicates the correct decoding order. Consequently, the power split constraint~\eqref{eq:p-max} is equivalent to the inequality below

\begin{equation}
R_{\bm{\theta}^*}(\vec{c})\leq p,
\label{eq:p-max2}
\end{equation}
where the power split variables $q_{1},q_{2},\ldots,q_{j}$ are replaced by variables $c_{1},c_{2},\ldots,c_{K}$ that represent the rates, respectively for UEs $1,2,\ldots,K$. 
Inequality \eqref{eq:p-max2} forms a bounded area and is the \textit{rate region} of all the $K$ UEs.

We remark that though the discussion above is based on applying the
successive rule on UEs by following their correct decoding order, the
rule is also applicable for the case that UEs are ordered arbitrarily.
Namely, for a group of UEs that are indexed in any given permutation
of the UEs, this successive rule also gives a formula with the same
form as \eqref{eq:A}. We introduce notations to represent this formula
in general. Define by $\T$ the index set of the $K!$ permutations of 
the $K$ UEs. For permutation $t \in \T$, $\pi_t(j)$ is the position
of UE $j$ in the permutation. 
Define $\B$ as a domain of $\bm{\theta}$, derived for all the permutations, i.e.,
\[
\B = \{\bm{\theta}: \theta_{hj} = 1 \text{~iff~} \pi_t(h) < \pi_t(j), t \in \T\}.
\]

We use $R_{\bm{\theta}}$ as a generic notation to represent \eqref{eq:A} defined for the order indicated by $\bm{\theta}$ ($\bm{\theta}\in\B$), so as to distinguish from the formula $R_{\bm{\theta}^{*}}$ that is specified for the correct decoding order. 
 
 We name the region defined by $R_{\bm{\theta}}(\bm{c})\leq p^{\max}$ with any $\bm{\theta}\in\B$ as \textit{pseudo rate region}. 
 \begin{equation}
R_{\bm{\theta}}(\bm{c})\leq p,~\bm{\theta}\in\B
 \label{eq:pseudo}
 \end{equation}
The reason for the name ``pseudo'' is because, with $\bm{\theta}$ ($\bm{\theta}\neq\bm{\theta}^{*}$), the \ac{SIC} may not be successfully performed for all UEs.

\begin{theorem}
Any pseudo rate region is a subset of the rate region of the correct decoding order. Namely, 
\[
\{\vec{c}:R_{\bm{\theta}}(\vec{c})\leq p\}\subseteq\{\vec{c}:R_{\bm{\theta}^{*}}(\vec{c})\leq p\}
\]
or equivalently,
\[
R_{\bm{\theta}^{*}}(\vec{c})\leq R_{\bm{\theta}}(\vec{c}),~\text{for~any~} \vec{c}\geq\vec{0}
\]
holds for any $\bm{\theta}\in\B$.
\label{thm:region}
\end{theorem}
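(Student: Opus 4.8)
The plan is to prove the equivalent pointwise inequality $R_{\bm{\theta}^{*}}(\vec{c})\le R_{\bm{\theta}}(\vec{c})$ for every $\bm{\theta}\in\B$ and every $\vec{c}\ge\vec{0}$, from which the set inclusion is immediate. The argument is an exchange (bubble‑sort) argument over permutations. First I would record that the successive‑substitution derivation leading to \eqref{eq:A} goes through verbatim when the $K$ users are traversed in the order prescribed by an arbitrary permutation instead of by the correct order: if $\sigma$ denotes the permutation with $\sigma(1)$ first in the decoding chain (no intra‑group interference), $\sigma(2)$ next, and so on, then
\[
R_{\bm{\theta}}(\vec{c})=\sum_{t=1}^{K}\bigl(w_{\sigma(t)}-w_{\sigma(t-1)}\bigr)\,e^{\sum_{k=t}^{K}c_{\sigma(k)}}-w_{\sigma(K)},\qquad w_{\sigma(0)}:=0.
\]
By the definition of the correct order, $\bm{\theta}^{*}$ is exactly the element of $\B$ whose permutation sorts the weights nondecreasingly, $w_{\sigma^{*}(1)}\le\cdots\le w_{\sigma^{*}(K)}$, with the prescribed tie‑break. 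Hence the theorem reduces to the purely combinatorial statement that the sorted permutation minimizes $R_{\sigma}(\vec{c})$ over all permutations $\sigma$.

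To keep the bookkeeping light I would rewrite $R_{\sigma}$ in the equivalent form obtained from \eqref{eq:A} by Abel summation (equivalently, from the recursion $P_{t}=e^{c_{\sigma(t)}}P_{t-1}+(e^{c_{\sigma(t)}}-1)\,w_{\sigma(t)}$, $P_{0}=0$, for the partial power sums $P_{t}=\sum_{s\le t}q_{\sigma(s)}$):
\[
R_{\sigma}(\vec{c})=\sum_{t=1}^{K}w_{\sigma(t)}\bigl(e^{c_{\sigma(t)}}-1\bigr)\,e^{\sum_{k=t+1}^{K}c_{\sigma(k)}}.
\]
Then I would compare $R_{\sigma}$ with $R_{\sigma'}$, where $\sigma'$ is obtained from $\sigma$ by swapping two adjacent positions $t$ and $t+1$. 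Every summand with index outside $\{t,t+1\}$ is unchanged (for $s<t$ the trailing exponent contains both swapped rates; for $s>t+1$ neither position is involved), so only two terms move, and a direct simplification yields the clean identity
\[
R_{\sigma}(\vec{c})-R_{\sigma'}(\vec{c})=\bigl(w_{\sigma(t)}-w_{\sigma(t+1)}\bigr)\bigl(e^{c_{\sigma(t)}}-1\bigr)\bigl(e^{c_{\sigma(t+1)}}-1\bigr)\,e^{\sum_{k=t+2}^{K}c_{\sigma(k)}}.
\]
If $\sigma$ has an out‑of‑order adjacent pair, i.e.\ $w_{\sigma(t)}>w_{\sigma(t+1)}$, then since $\vec{c}\ge\vec{0}$ each factor on the right is nonnegative, so the swap does not increase $R$. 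Repeatedly fixing out‑of‑order adjacent pairs turns any $\sigma$ into $\sigma^{*}$ while never increasing the objective, which gives $R_{\bm{\theta}^{*}}(\vec{c})\le R_{\bm{\theta}}(\vec{c})$; tied weights contribute a zero difference, consistent with the tie‑break convention that makes $\bm{\theta}^{*}$ unique.

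I expect the only genuinely computational step to be the adjacent‑swap identity above, and the care needed there is in the boundary cases — the first position ($t=1$, where $w_{\sigma(0)}=0$) and the last ($t+1=K$, where the trailing $-w_{\sigma(K)}$ term in \eqref{eq:A} is the one that moves). This is precisely why I would switch to the Abel‑summed form of $R_{\sigma}$ before performing the swap: in that form every position contributes a summand of uniform shape, only the $t$‑ and $(t+1)$‑summands are affected, and the boundary cases no longer require separate treatment. Everything else is routine algebra together with the monotone‑sorting (bubble‑sort) reduction, and the equivalence between the pointwise inequality and the claimed set inclusion is immediate.
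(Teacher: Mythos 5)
Your proof is correct and follows essentially the same route as the paper's: an adjacent-transposition (bubble-sort) argument culminating in the identity $R_{\sigma}-R_{\sigma'}=(w_{\sigma(t)}-w_{\sigma(t+1)})(e^{c_{\sigma(t)}}-1)(e^{c_{\sigma(t+1)}}-1)e^{\sum_{k=t+2}^{K}c_{\sigma(k)}}$, which is exactly the paper's \eqref{eq:R'-A} up to sign. Your Abel-summed form $R_{\sigma}(\vec{c})=\sum_{t}w_{\sigma(t)}(e^{c_{\sigma(t)}}-1)e^{\sum_{k=t+1}^{K}c_{\sigma(k)}}$ is a nice presentational streamlining that makes the swap computation uniform across positions, but it does not change the substance of the argument.
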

\begin{proof}
Consider the pseudo rate region for $\bm{\theta}$ ($\bm{\theta}\in\B$), i.e. $R_{\bm{\theta}}(\vec{c})\leq p$. We index the UEs from $1$ to $K$ by following the order indicated by $\bm{\theta}$. We remark that if $\bm{\theta}$ is not the correct decoding order (i.e. $\bm{\theta}\neq\bm{\theta}^{*}$), then there must exist two UEs that are adjacent in the list, denoted by $\ell$ and $\ell+1$, such that $w_{\ell}>w_{\ell+1}$.
We swap the order of the two, and denote by $\bm{\theta}'$ the new decoding order. Below, we prove $R_{\bm{\theta}'}(\vec{c})\leq R_{\bm{\theta}}(\vec{c})$ for any non-negative $\vec{c}$.

To ease our representation, we define $w_{0}=0$ and $w_{K+1}=w_{K+2}$.  We also explicitly impose that for any summation notation ``$\sum_{t=a}^{b}$'' in our expression, if $b<a$, then this term in the sum equals zero.

For $\ell$ and $\ell+1$ ($\ell = 1,2,\ldots,K-1$), we have
\begin{align*}
R_{\bm{\theta}}(\vec{c}) & =  \sum_{t=1}^{\ell -1}\left(w_t-w_{t-1}\right)e^{\sum_{k=t}^{K}c_{k}} \\
                & \quad + (w_{\ell}-w_{\ell -1} )e^{\sum_{k=\ell }^{K}c_{k}} \\
                & \quad + (w_{\ell+1}-w_{\ell})e^{c_{\ell }+\sum_{k=\ell +2}^{K}c_{k}} \\
                & \quad + (w_{\ell+2}-w_{\ell+1})e^{\sum_{k=\ell +2}^{K}c_{k}} \\
                & \quad + \sum_{t=\ell+2}^{K}(w_{t+1}-w_{t})e^{\sum_{k=t+1}^{K}c_{k}}
\end{align*}
and
\begin{align*}
R_{\bm{\theta}'}(\vec{c}) & =  \sum_{t=1}^{\ell -1}\left(w_t-w_{t-1}\right)e^{\sum_{k=t}^{K}c_{k}} \\
                & \quad + (w_{\ell +1}-w_{\ell -1} )e^{\sum_{k=\ell }^{K}c_{k}} \\
                & \quad + (w_{\ell }-w_{\ell +1})e^{c_{\ell }+\sum_{k=\ell +2}^{K}c_{k}} \\
                & \quad + (w_{\ell+2}-w_{\ell})e^{\sum_{k=\ell +2}^{K}c_{k}} \\
                & \quad + \sum_{t=\ell+2}^{K}(w_{t+1}-w_{t})e^{\sum_{k=t+1}^{K}c_{k}}
\end{align*}

We remark that, the difference $R'(\vec{c})-R(\vec{c})$ makes the two summation terms in both the head and tail (if either exists) disappear. See~\eqref{eq:R'-A} below.
\begin{figure}[!ht]
\begin{align}
 & R_{\bm{\theta}'}(\vec{c}_{})-R_{\bm{\theta}}(\vec{c}) \nonumber\\
 =~ & (w_{\ell +1}-w_{\ell -1} )e^{\sum_{k=\ell }^{K}c_{k}}  + (w_{\ell }-w_{\ell +1})e^{c_{\ell }+\sum_{k=\ell +2}^{K}c_{k}} \nonumber\\
						   & +(w_{\ell+2}-w_{\ell})e^{\sum_{k=\ell +2}^{K}c_{k}} 
							 - (w_{\ell}-w_{\ell-1})e^{\sum_{k=\ell}^{K}c_{k}} \nonumber\\
						   &- (w_{\ell+1}-w_{\ell})e^{\sum_{k=\ell+1}^{K}c_{k}} 
						    - (w_{\ell+2}-w_{\ell+1})e^{\sum_{k=\ell+2}^{K}c_{k}} \nonumber\\
						 =~ & e^{\sum_{k=\ell}^{K}c_{k}}\big\{(w_{\ell +1}-w_{\ell -1} ) - (w_{\ell}-w_{\ell-1})\big\} \nonumber\\
						   & + e^{\sum_{k=\ell+2}^{K}c_{k}}\big\{(w_{\ell}-w_{\ell+1})e^{c_{\ell}} -(w_{\ell+1}-w_{\ell})e^{c_{\ell+1}}\big\} \nonumber\\
						   & +e^{\sum_{k=\ell+2}^{K}}\big\{(w_{\ell +2}-w_{\ell} )- (w_{\ell+2}-w_{\ell+1})\big\} \nonumber\\
						=~  & e^{\sum_{k=\ell}^{K}c_{k}}(w_{\ell +1}-w_{\ell }) +e^{\sum_{k=\ell+2}^{K}}(w_{\ell+1}-w_{\ell}) \nonumber\\
						   & +e^{\sum_{k=\ell+2}^{K}c_{k}}(w_{\ell+1}-w_{\ell})(-e^{c_{\ell}}-e^{c_{\ell+1}}) \nonumber\\
						=~   &  (w_{\ell+1}-w_{\ell})\bigg\{e^{\sum_{k=\ell}^{K}c_{k}}- e^{c_{\ell}+\sum_{k=\ell+2}^{K}c_{k}} \nonumber\\
						   & \quad\quad\quad\quad\quad\quad\quad - e^{\sum_{k=\ell+1}^{K}c_{k}}+e^{\sum_{k=\ell+2}^{K}c_{k}} \bigg\} \nonumber\\
						=~   &  (w_{\ell+1}-w_{\ell})e^{\sum_{k=\ell+2}^{K}c_{k}}\big\{e^{c_{\ell} + c_{\ell+1}} -e^{c_{\ell}} -e^{c_{\ell+1}} +1 \big\} \nonumber\\
						=~   &  (w_{\ell+1}-w_{\ell})e^{\sum_{k=\ell+2}^{K}c_{k}}
(e^{c_{\ell}}-1)(e^{c_{\ell+1}}-1)
\label{eq:R'-A}
\end{align}
\end{figure}

In the result of~\eqref{eq:R'-A}, because $w_{\ell}> w_{\ell+1}$ and $c_{\ell}\geq 0$ ($\ell=1,2,\ldots,K$), we conclude
\[
R_{\bm{\theta}'}(\vec{c})\leq R_{\bm{\theta}}(\vec{c}),~\text{for~any~} \vec{c}\geq\vec{0}.
\]

As a result,
\begin{equation}
\{\vec{c}:R_{\bm{\theta}}(\vec{c})\leq p\}\subseteq\{\vec{c}:R_{\bm{\theta}'}(\vec{c})\leq p\}
\label{eq:R'}
\end{equation}

The result in~\eqref{eq:R'} shows that, for two adjacent UEs $\ell$ and $\ell+1$ with $w_{\ell}> w_{\ell+1}$, 
swapping the order of the two UEs enlarges the pseudo rate region. 
We therefore conclude that the correct decoding order yields the largest rate region, namely, 
\[
\{\vec{c}:R_{\bm{\theta}}(\vec{c})\leq p\}\subseteq\{\vec{c}:R_{\bm{\theta}^{*}}(\vec{c})\leq p\}
\]
and, for any $\vec{c}\geq\vec{0}$,
\[
R_{\bm{\theta}^{*}}(\vec{c})\leq R_{\bm{\theta}}(\vec{c}).
\]
The above holds for any $\bm{\theta}\in\B$, and the theorem follows.
\end{proof}

\subsection{Convergence and Optimality of Fixed-Point Algorithm for Load Coupling with NOMA}
\label{subsec:convergence}

In this section, we investigate the convergence of the approach for
solving~\eqref{eq:minF} as outlined in Section~\ref{sec:problem},
without any restriction/limitation. We first re-define the single-cell problem
in~\eqref{eq:fi} in Section~\ref{sec:problem}, by taking into
consideration the
dependency between decoding order and interference.
\begin{equation}
f_i(\bm{\rho}_{-i}) = \min_{\vec{q},\vec{x},\vec{w},\bm{\theta}}\rho_i\text{ s.t. }\eqref{eq:minF-d_j}\text{--}\eqref{eq:minF-theta_hj}~\text{of cell }i
\label{eq:fi2}
\end{equation}

We remark that, though $\bm{\theta}$ is variable in \eqref{eq:fi2}, it
will induce the correct decoding order $\bm{\theta}^{*}$ by
constraints~\eqref{eq:minF-w_h-w_j}--\eqref{eq:minF-theta_hj}, because
here $\rho_{-i}$ and hence interference is known.  Therefore,
$\bm{\theta}$ is determined for any given $\bm{\rho}_{-i}$ in
\eqref{eq:fi2}. However, we do not eliminate the $\bm{\theta}$
variables from \eqref{eq:fi2}, because even though $\bm{\theta}^*$ is
directly induced by $\bm{\rho}_{-i}$, the latter is variable in the
fixed-point iterations, for which we will prove the convergence and
optimality.

We first prove Lemma~\ref{lma:c} below. In the
lemma, the definition of $c_{j\u}$ follows that in \eqref{eq:cju}, however
the dependency of $c_{j\u}$ as a function of $w_j$ is now made explicit.

\begin{lemma} 
Given non-negative $w_j$, the inequalities below hold for any $\alpha>1$.
\[
\frac{1}{\alpha}c_{j\u}(w_j)< c_{j\u}(\alpha w_j)
\]
\label{lma:c}
\end{lemma}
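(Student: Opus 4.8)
The plan is to treat $c_{j\u}$ as a function of the single scalar $w_j$, holding the other quantities fixed. Set $a := q_{j\u}$ and $b := \sum_{h\in\u} q_{h\u}\theta_{hj}$, so that $c_{j\u}(w_j) = \log\!\big(1 + \tfrac{a}{b + w_j}\big)$ with $a \ge 0$, $b \ge 0$; one may assume $a > 0$, since otherwise $c_{j\u}\equiv 0$ and the claimed strict inequality is not the intended case. Multiplying the desired inequality by $\alpha > 0$, exponentiating, and using that $\log$ is strictly increasing, the lemma is equivalent to
\[ 1 + \frac{a}{b + w_j} \;<\; \Big(1 + \frac{a}{b + \alpha w_j}\Big)^{\!\alpha},\qquad \alpha > 1 . \]
So the first step is this reduction to a purely algebraic inequality.

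The second step is to obtain the inequality by chaining two elementary facts. By Bernoulli's inequality, $(1 + y)^\alpha \ge 1 + \alpha y$ for $\alpha \ge 1$ and $y > -1$; applying it with $y = \tfrac{a}{b + \alpha w_j}$ gives $\big(1 + \tfrac{a}{b + \alpha w_j}\big)^{\alpha} \ge 1 + \tfrac{\alpha a}{b + \alpha w_j}$. Separately, a cross-multiplication shows $\tfrac{a}{b + w_j} \le \tfrac{\alpha a}{b + \alpha w_j}$, because after clearing the (positive) denominators it reduces to $ab \le \alpha a b$, which holds for $\alpha \ge 1$. Combining the two estimates yields exactly $1 + \tfrac{a}{b + w_j} \le \big(1 + \tfrac{a}{b + \alpha w_j}\big)^{\alpha}$.

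The third step is to upgrade ``$\le$'' to ``$<$'' by locating the slack. If $b > 0$, the cross-multiplication step is already strict, since $\alpha > 1$ and $ab > 0$. If $b = 0$ but $w_j > 0$, then $y = \tfrac{a}{\alpha w_j} > 0$ and Bernoulli's inequality is strict for $\alpha > 1$. The only remaining configuration, $b = 0$ and $w_j = 0$, is degenerate --- zero noise and zero intra-cell interference, giving an unbounded rate --- and does not arise, e.g. because by \eqref{eq:minF-w_j} one has $w_j \ge \sigma^2 / g_{ij} > 0$. Hence the strict inequality holds in every relevant case, which is the lemma.

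I expect the only point needing care in the write-up to be this strictness bookkeeping near the boundary $b = 0$; the inequality itself is a short consequence of Bernoulli. A slightly more computational alternative would be to show directly that $\alpha \mapsto \alpha\, c_{j\u}(\alpha w_j)$ is strictly increasing on $[1,\infty)$ by differentiating and invoking $\log(1 + x) > x/(1+x)$ for $x > 0$, but the Bernoulli route is cleaner and avoids the derivative computation.
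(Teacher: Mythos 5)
Your proof is correct, but it takes a genuinely different route from the paper. The paper's argument is a one-line appeal to the load-coupling literature: it asserts that $1/c_{j\u}(w_j)$ is strictly concave in $\bm{\rho}_{-i}$ (equivalently in $w_j$, which is affine in $\bm{\rho}_{-i}$), and then uses the standard fact that a nonnegative strictly concave function $g$ with $g(0)\geq 0$ satisfies $g(\alpha w)<\alpha g(w)$ for $\alpha>1$, giving $1/c_{j\u}(\alpha w_j)<\alpha/c_{j\u}(w_j)$. That is slick but leans on an unproved (and not entirely trivial) concavity claim, and it silently assumes $c_{j\u}>0$ so that the reciprocal is defined. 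Your reduction to $1+\tfrac{a}{b+w_j}<\bigl(1+\tfrac{a}{b+\alpha w_j}\bigr)^{\alpha}$ followed by Bernoulli's inequality and the cross-multiplication $\tfrac{a}{b+w_j}\leq\tfrac{\alpha a}{b+\alpha w_j}$ is fully self-contained and elementary, and your bookkeeping of where strictness comes from ($ab>0$ versus $y>0$ in Bernoulli) is more careful than the paper's. You also correctly flag the degenerate cases: when $q_{j\u}=0$ the stated strict inequality actually fails (a gap in the lemma's statement that the paper glosses over as well), and $w_j=0$ is excluded in the model by $w_j\geq\sigma^2/g_{ij}>0$. In short, the paper's route buys brevity by importing a known concavity result; yours buys transparency and an explicit treatment of the boundary cases at the cost of a slightly longer algebraic argument.
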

\begin{proof}
Since $1/c_{j\u}(w_j)$ is strictly concave in $\bm{\rho}_{-i}$, we have
\[
\frac{1}{c_{j\u}(\alpha w_j)}<\frac{\alpha}{c_{j\u}(w_j)} \Rightarrow \frac{1}{\alpha}c_{j\u}(w_j)< c_{j\u}(\alpha w_j)
\]
\end{proof}

We use $f_i(\bm{\rho}_{-i},\bm{\theta})$ to represent the optimization problem defined in~\eqref{eq:fi2} under any given $\bm{\theta}$ ($\bm{\theta}\in\B$). Mathematically:
\begin{equation}
f_i(\bm{\rho}_{-i},\bm{\theta}) = \min_{\vec{q},\vec{x},\vec{w}}\rho_i\text{ s.t. }\eqref{eq:minF-d_j}\text{--}\eqref{eq:minF-rho_i}~\text{of cell }i
\label{eq:fi_theta}
\end{equation}
We remark that by variable substitution as in Section~\ref{subsec:rate-region}, one has a reformulation of $f_i(\bm{\rho}_{-i},\bm{\theta})$, with $\vec{q}$ replaced by $\vec{c}$:
\begin{subequations}
\begin{align}
f_i(\bm{\rho}_{-i},\bm{\theta}) = & \min_{\vec{c},\vec{x},\vec{w}} \rho_i \\	
                     \text{s.t.} &\quad  R_{\bm{\theta}}(\vec{c},\vec{w})\leq p_i \label{eq:fi_theta_2-pi}\\
                     	 &\quad \eqref{eq:minF-w_j}\text{ and }\eqref{eq:minF-rho_i}\text{ of cell $i$} \nonumber
\end{align}
 \label{eq:fi_theta_2}
\end{subequations}

Note that the single-cell optimization problem above always has a
solution, because the problem is to minimize resource usage subject to
given demand and interference, and there is no 
upper limit imposed for resource usage. As a result, function $f_i(\bm{\rho}_{-i},\bm{\theta})$
is well defined.

\begin{lemma}
For any given $\bm{\theta}$ ($\bm{\theta}\in\B$), $f_i(\bm{\rho}_{-i},\bm{\theta})$ is an SIF of $\bm{\rho}_{-i}$ ($\bm{\rho}_{-i}\geq\bm{0}$).
\label{lma:sif}
\end{lemma}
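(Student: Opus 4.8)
The plan is to check, directly, the two defining properties in Definition~\ref{def:sif} for the map $\bm{\rho}_{-i}\mapsto f_i(\bm{\rho}_{-i},\bm{\theta})$, with $\bm{\theta}\in\B$ held fixed. I will work with the $\vec{q}$-form \eqref{eq:fi_theta} of the single-cell problem (constraints \eqref{eq:minF-d_j}--\eqref{eq:minF-rho_i} of cell $i$), rather than the rate reformulation, because Lemma~\ref{lma:c} is stated for $c_{j\u}(w_j)$ at a fixed power split. Three elementary facts carry the bookkeeping: (i) by \eqref{eq:minF-w_j}, $w_j$ is affine and non-decreasing in each component of $\bm{\rho}_{-i}$, and $w_j(\alpha\bm{\rho}_{-i})=\alpha w_j(\bm{\rho}_{-i})-(\alpha-1)\sigma^2/g_{ij}\leq\alpha w_j(\bm{\rho}_{-i})$ for $\alpha>1$, because the additive noise term is not amplified; (ii) at a fixed $\vec{q}$, $c_{j\u}(w_j)$ is non-increasing in $w_j$; and (iii) Lemma~\ref{lma:c} itself.

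For \emph{monotonicity}, take $\bm{\rho}_{-i}\geq\bm{\rho}_{-i}'$ and let $(\vec{q},\vec{x},\vec{w})$ be optimal for $f_i(\bm{\rho}_{-i},\bm{\theta})$, so that $\rho_i=\sum_{\u\in\U_i}x_{\u}$. I keep $\vec{q}$ and $\vec{x}$ and replace $\vec{w}$ by the vector $\vec{w}'$ that \eqref{eq:minF-w_j} associates with $\bm{\rho}_{-i}'$. By fact~(i), $w_j'\leq w_j$, hence by fact~(ii) $c_{j\u}(w_j')\geq c_{j\u}(w_j)$ for every $j\in\J_i$, $\u\in\U_j$, so all demand constraints \eqref{eq:minF-d_j} of cell $i$ remain satisfied; \eqref{eq:minF-p_i} (which constrains only $\vec{q}$) and \eqref{eq:minF-rho_i} are untouched. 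Thus $(\vec{q},\vec{x},\vec{w}')$ is feasible for $f_i(\bm{\rho}_{-i}',\bm{\theta})$ with objective $\sum_{\u}x_{\u}=\rho_i$, giving $f_i(\bm{\rho}_{-i}',\bm{\theta})\leq f_i(\bm{\rho}_{-i},\bm{\theta})$.

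For \emph{scalability}, fix $\alpha>1$, let $(\vec{q},\vec{x},\vec{w})$ be optimal for $f_i(\bm{\rho}_{-i},\bm{\theta})$ with $\rho_i=\sum_{\u\in\U_i}x_{\u}$, and let $\tilde{\vec{w}}$ be the vector that \eqref{eq:minF-w_j} associates with $\alpha\bm{\rho}_{-i}$; by fact~(i), $\tilde w_j\leq\alpha w_j$. Keeping the same power split $\vec{q}$ (so \eqref{eq:minF-p_i} still holds), facts~(ii) and~(iii) give, for every $j\in\J_i$ and $\u\in\U_j$,
\[
c_{j\u}(\tilde w_j)\;\geq\;c_{j\u}(\alpha w_j)\;>\;\tfrac{1}{\alpha}\,c_{j\u}(w_j),
\]
so $\alpha\sum_{\u\in\U_j}c_{j\u}(\tilde w_j)x_{\u}>\sum_{\u\in\U_j}c_{j\u}(w_j)x_{\u}\geq d_j$. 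Hence scaling the allocation to $\alpha\vec{x}$ leaves a \emph{strict} surplus in every demand constraint; since $\J_i$ is finite and there is no cap on resource usage in the single-cell problem, one may then shrink uniformly to $x_{\u}''=\beta\alpha x_{\u}$ with $\beta=\max_{j\in\J_i,\,d_j>0}d_j\big/\big(\alpha\sum_{\u\in\U_j}c_{j\u}(\tilde w_j)x_{\u}\big)\in(0,1)$ and still satisfy all of \eqref{eq:minF-d_j}. This produces a feasible point of $f_i(\alpha\bm{\rho}_{-i},\bm{\theta})$ with objective $\beta\alpha\rho_i<\alpha\rho_i$, i.e.\ $f_i(\alpha\bm{\rho}_{-i},\bm{\theta})<\alpha f_i(\bm{\rho}_{-i},\bm{\theta})$; together with monotonicity this makes $f_i(\cdot,\bm{\theta})$ an SIF.

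The feasibility bookkeeping is routine; the crux is the scalability step, specifically invoking Lemma~\ref{lma:c} to upgrade the (possibly non-strict) bound $\tilde w_j\leq\alpha w_j$ into a strict surplus in the demand constraints, and then converting that surplus into a strictly smaller $\rho_i$ through one common factor $\beta<1$ — which is exactly where finiteness of $\J_i$ and the absence of an upper bound on $\vec{x}$ enter. I will also note that the degenerate all-zero-demand case, in which $f_i\equiv0$ and strict scalability would be vacuous, is excluded by the standing assumption that each cell has a UE with positive demand, and that the whole construction stays inside the fixed-$\bm{\theta}$ problem, so the decoding order is never disturbed.
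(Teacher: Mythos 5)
Your proof is correct and takes essentially the same route as the paper's: monotonicity by transferring the optimal $(\vec{q},\vec{x})$ to the relaxed problem, and scalability by combining Lemma~\ref{lma:c} with the observation that $w_j(\alpha\bm{\rho}_{-i})\leq\alpha w_j(\bm{\rho}_{-i})$. The only packaging difference is that the paper sandwiches the optimum $z$ of an auxiliary problem \eqref{eq:fib1} between $f_i(\alpha\bm{\rho}_{-i},\bm{\theta})$ and $\alpha f_i(\bm{\rho}_{-i},\bm{\theta})$, whereas you exhibit the strictly cheaper feasible point $(\vec{q},\beta\alpha\vec{x},\tilde{\vec{w}})$ directly; your explicit handling of the unamplified noise term and of the degenerate zero-demand case is, if anything, slightly more careful than the paper's.
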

\begin{proof}
(Monotonicity) Consider any $f_i(\bm{\rho}_{-i},\bm{\theta})$ ($\bm{\theta}\in\B$), and \eqref{eq:fi_theta}. For any $\bm{\rho}_{-i}$ and $\bm{\rho}'_{-i}$ with $\bm{\rho}'_{-i}\leq\bm{\rho}_{-i}$, we have $w_j(\bm{\rho}_{-i})\geq w_j(\bm{\rho}'_{-i})$ ($j\in\U_i$). Therefore $c_{j\u}(\bm{\rho}_{-i})\leq c_{j\u}(\bm{\rho}'_{-i})$. Replacing $c_{j\u}(\bm{\rho}_{-i})$ by $c_{j\u}(\bm{\rho}'_{-i})$ leads to a relaxation on the constraints~\eqref{eq:minF-d_j}, resulting in lower objective value. We then conclude $f_i(\bm{\rho}'_{-i},\bm{\theta})\leq f_i(\bm{\rho}_{-i},\bm{\theta})$ for any $\bm{\theta}\in\B$. 

(Scalability) Denote the value of $f_{i}(\bm{\rho}_{-i},\bm{\theta})$ by $\rho''_i$, i.e.  $\rho''_i=f_{i}(\bm{\rho}_{-i},\bm{\theta})$. Denote the optimal solution of $f_{i}(\bm{\rho}_{-i},\bm{\theta})$ by $\langle \vec{q}'', \vec{x}'', \vec{w}'' \rangle$. 
Under $\bm{\rho}_{-i}$, consider the following minimization problem. Denote its optimal objective value  by $z$.
\begin{subequations}
\begin{align}
  z = & \min_{\vec{q},\vec{w},\vec{x}} \rho_i \\	
                     \text{s.t.}   & \quad \frac{1}{\alpha}\sum_{\u\in\U_j} c_{j\u}(\vec{q},w_j)x_{\u}\geq  d_j,~j\in\J_i \label{eq:fib1-dj} \\
                     &\quad \eqref{eq:minF-w_j}\text{--}\eqref{eq:minF-rho_i}\text{ of cell $i$} \nonumber
\end{align}
 \label{eq:fib1}
\end{subequations}
It is straightforward to verify that $\langle \vec{q}'',\vec{w}'',\alpha\vec{x}'' \rangle$ is feasible to \eqref{eq:fib1}, with the objective value equaling $\alpha f_i(\bm{\rho}_{-i},\bm{\theta})$. We conclude that the optimum of \eqref{eq:fib1} is no higher than $\alpha f_i(\bm{\rho}_{-i},\bm{\theta})$. Namely, we have
\begin{equation}
z\leq \alpha f_i(\bm{\rho}_{-i},\bm{\theta}).
\label{eq:z1}
\end{equation}

For $f_{i}(\alpha\bm{\rho}_{-i},\bm{\theta})$, the corresponding formulation is as follows, where we remark that multiplying $\alpha$ on $\bm{\rho}_{-i}$ is equivalent to performing the multiplication on $w_j$ for all $j\in\J_i$.
\begin{subequations}
\begin{align}
 f_{i}(\alpha\bm{\rho}_{-i},\bm{\theta}) = & \min_{\vec{q},\vec{w},\vec{x}} \rho_i \\	
                     \text{s.t.} & \quad \sum_{\u\in\U_j} c_{j\u}(\vec{q},\alpha w_j)x_{\u}\geq d_j,~j\in\J_i \label{eq:fib2-dj} \\
                     &\quad \eqref{eq:minF-w_j}\text{--}\eqref{eq:minF-rho_i}\text{ of cell $i$} \nonumber
\end{align}
 \label{eq:fib2}
\end{subequations}

Note that \eqref{eq:fib2} differs from \eqref{eq:fib1} only in \eqref{eq:fib2-dj}, and \eqref{eq:fib1-dj} is equality at optimum. By Lemma~\ref{lma:c}, for any solution of \eqref{eq:fib1}, using it for~\eqref{eq:fib2} makes~\eqref{eq:fib2-dj} an inequality. Therefore, \eqref{eq:fib2} has a better optimum than \eqref{eq:fib1}, i.e.
\begin{equation}
f_i(\alpha\bm{\rho}_{-i},\bm{\theta})< z.	
\label{eq:z2}
\end{equation}
By \eqref{eq:z1} and \eqref{eq:z2}, $f_i(\alpha\bm{\rho}_{-i},\bm{\theta})<\alpha f_i(\bm{\rho}_{-i},\bm{\theta})$ holds. 
\end{proof}

\begin{lemma}
The NOMA decoding order $\bm{\theta}^{*}$ is optimal to $\min_{\bm{\theta}\in\mathcal{B}}f_i(\bm{\rho}_{-i},\bm{\theta})$ ($i\in\I$), i.e.,
\[
f_i(\bm{\rho}_{-i},\bm{\theta}^{*}) = \min_{\bm{\theta}\in\mathcal{B}}f_i(\bm{\rho}_{-i},\bm{\theta}),~i\in\I.
\]
\label{lma:fi_theta}	
\end{lemma}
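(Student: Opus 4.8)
The plan is to reduce the statement to Theorem~\ref{thm:region} by working with the rate-variable reformulation \eqref{eq:fi_theta_2} of $f_i(\bm{\rho}_{-i},\bm{\theta})$ rather than the power-variable form \eqref{eq:fi_theta}. The reason is that, in the $\vec{q}$-formulation, $\bm{\theta}$ enters both the demand constraints \eqref{eq:minF-d_j} (through $c_{j\u}$) and the power-limit constraints, whereas after the substitution of Section~\ref{subsec:rate-region} the $\vec{c}$ variables are the decision variables, the demand constraints $\sum_{\u\in\U_j}c_{j\u}x_{\u}\geq d_j$ carry no dependence on $\bm{\theta}$, and the \emph{only} place $\bm{\theta}$ appears is the per-group power constraint \eqref{eq:fi_theta_2-pi}, i.e., $R_{\bm{\theta}}(\vec{c},\vec{w})\leq p_i$ for each $\u\in\U_i$. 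Constraints \eqref{eq:minF-w_j} and \eqref{eq:minF-rho_i} are likewise $\bm{\theta}$-free, and the objective $\rho_i$ does not involve $\bm{\theta}$.

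First I would fix $\bm{\rho}_{-i}$ (hence $\vec{w}$ is determined via \eqref{eq:minF-w_j}) and fix an arbitrary $\bm{\theta}\in\B$. Applying Theorem~\ref{thm:region} group by group --- for each $\u\in\U_i$, the restriction of $\bm{\theta}$ to $\u$ is one of the $K!$ permutation orders, so the theorem yields $R_{\bm{\theta}^{*}}(\vec{c},\vec{w})\leq R_{\bm{\theta}}(\vec{c},\vec{w})$ for every $\vec{c}\geq\vec{0}$ on that group --- I conclude that every $\langle\vec{c},\vec{x},\vec{w}\rangle$ feasible for $f_i(\bm{\rho}_{-i},\bm{\theta})$ is also feasible for $f_i(\bm{\rho}_{-i},\bm{\theta}^{*})$, because the only constraint that changed, \eqref{eq:fi_theta_2-pi}, becomes only easier to satisfy. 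Since the objective value $\rho_i$ of such a solution is unchanged, this gives $f_i(\bm{\rho}_{-i},\bm{\theta}^{*})\leq f_i(\bm{\rho}_{-i},\bm{\theta})$.

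Since the inequality holds for every $\bm{\theta}\in\B$ and since $\bm{\theta}^{*}\in\B$ itself, $\bm{\theta}^{*}$ attains the minimum, i.e., $f_i(\bm{\rho}_{-i},\bm{\theta}^{*})=\min_{\bm{\theta}\in\B}f_i(\bm{\rho}_{-i},\bm{\theta})$, and the same argument applies to every $i\in\I$. I do not expect a genuine obstacle here: the content is essentially a one-line consequence of Theorem~\ref{thm:region}. The only points that need care are (i) invoking the $\vec{c}$-reformulation so that $\bm{\theta}$ is confined to the power constraint; (ii) applying the theorem separately to each group of cell $i$ and noting that the demand constraints couple groups only through the $\bm{\theta}$-independent variables $\vec{c}$ and $\vec{x}$; and (iii) recalling the remark following \eqref{eq:fi_theta_2} that each single-cell problem is always feasible, so every $f_i(\bm{\rho}_{-i},\bm{\theta})$ is well defined and the minimum over $\B$ is over a nonempty finite set.
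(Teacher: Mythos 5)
Your proposal is correct and follows essentially the same route as the paper: both pass to the rate-variable reformulation \eqref{eq:fi_theta_2}, observe that $\bm{\theta}$ enters only through the power constraint \eqref{eq:fi_theta_2-pi}, and invoke Theorem~\ref{thm:region} to conclude that replacing $R_{\bm{\theta}}$ by $R_{\bm{\theta}^{*}}$ can only relax that constraint, hence cannot worsen the optimum. Your version merely spells out a few details the paper leaves implicit (the per-group application of the theorem and the well-definedness of each $f_i(\bm{\rho}_{-i},\bm{\theta})$), which are fine.
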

\begin{proof}
Consider any cell $i$ ($i\in\I$) and any decoding order $\bm{\theta}$ other than the correct one. By Theorem~\ref{thm:region}, under fixed $\vec{c}$ and $\vec{w}$, replacing $R_{\bm{\theta}}(\vec{c},\vec{w})$ by $R_{\bm{\theta}^{*}}(\vec{c},\vec{w})$ makes the constraint~\eqref{eq:fi_theta_2-pi} remain satisfied (or relaxed if $R_{\bm{\theta}^{*}}(\vec{c},\vec{w})<R_{\bm{\theta}}(\vec{c},\vec{w})$), such that one will not get a worse objective value under the correct decoding order. Hence, we conclude that, at the optimum of  $\min_{\bm{\theta}\in\B} f_i(\bm{\rho}_{-i},\bm{\theta})$, $\bm{\theta}$ is (or can be replaced by) $\bm{\theta}^{*}$. Hence $\bm{\theta}^{*} \in  
\argmin_{\bm{\theta}\in\mathcal{B}}f_i(\bm{\rho}_{-i},\bm{\theta})$.
\end{proof}

By Lemma~\ref{lma:fi_theta}, the minimum of $f_i$ for any cell and user group is
achieved by using the NOMA decoding order $\bm{\theta}^{*}$.  In other
words, in any fixed-point iteration, using $\bm{\theta}^{*}$ that is
induced by the interference in that iteration, does not cause any loss
of optimality in the iteration's output.

We then prove Theorem~\ref{thm:sif} below. 

\begin{theorem}
The function $f_i(\bm{\rho}_{-i})$ ($i\in\I$) in~\eqref{eq:fi2} is SIF. 
\label{thm:sif}
\end{theorem}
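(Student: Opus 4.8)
The plan is to express $f_i(\bm{\rho}_{-i})$ as a pointwise minimum, over a finite index set, of the functions $f_i(\bm{\rho}_{-i},\bm{\theta})$, and then to combine Lemma~\ref{lma:sif} and Lemma~\ref{lma:fi_theta} with the observation that the pointwise minimum of finitely many SIFs is again an SIF.

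First I would argue that $f_i(\bm{\rho}_{-i}) = \min_{\bm{\theta}\in\B} f_i(\bm{\rho}_{-i},\bm{\theta})$. In \eqref{eq:fi2} the indicator $\bm{\theta}$ is a decision variable, but constraints \eqref{eq:minF-w_h-w_j}--\eqref{eq:minF-theta_hj}, together with the tie-breaking convention, force $\bm{\theta}$ to coincide with the unique correct decoding order $\bm{\theta}^{*}$ induced by the (fixed) vector $\bm{\rho}_{-i}$; dropping these constraints while fixing $\bm{\theta}=\bm{\theta}^{*}$ leaves the feasible set unchanged, so $f_i(\bm{\rho}_{-i}) = f_i(\bm{\rho}_{-i},\bm{\theta}^{*})$. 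By Lemma~\ref{lma:fi_theta}, $f_i(\bm{\rho}_{-i},\bm{\theta}^{*}) = \min_{\bm{\theta}\in\B} f_i(\bm{\rho}_{-i},\bm{\theta})$, which gives the desired representation over the finite set $\B$.

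Next I would verify that a pointwise minimum $g(\bm{\rho}) := \min_{\bm{\theta}\in\B} g_{\bm{\theta}}(\bm{\rho})$ of finitely many SIFs $g_{\bm{\theta}}$ is an SIF. For monotonicity, with $\bm{\rho}\geq\bm{\rho}'$ pick $\bm{\theta}^{\circ}$ attaining the minimum at $\bm{\rho}$; then $g(\bm{\rho}) = g_{\bm{\theta}^{\circ}}(\bm{\rho}) \geq g_{\bm{\theta}^{\circ}}(\bm{\rho}') \geq g(\bm{\rho}')$. For scalability, with $\alpha>1$ pick $\bm{\theta}^{\circ}$ attaining the minimum at $\bm{\rho}$; then $g(\alpha\bm{\rho}) \leq g_{\bm{\theta}^{\circ}}(\alpha\bm{\rho}) < \alpha\, g_{\bm{\theta}^{\circ}}(\bm{\rho}) = \alpha\, g(\bm{\rho})$. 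Instantiating this with $g_{\bm{\theta}} = f_i(\cdot,\bm{\theta})$, each an SIF by Lemma~\ref{lma:sif}, yields that $f_i(\bm{\rho}_{-i})$ is an SIF, and the theorem follows.

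The argument is short, and I do not anticipate a genuine obstacle; the only points requiring care are (i) justifying that the constraints in \eqref{eq:fi2} pin $\bm{\theta}$ down to $\bm{\theta}^{*}$, so that $f_i$ really is the pointwise minimum over $\B$ (this relies on the uniqueness guaranteed by the tie-break rule and on Lemma~\ref{lma:fi_theta}), and (ii) in the scalability step, selecting the minimizing $\bm{\theta}$ at the point $\bm{\rho}$ rather than at $\alpha\bm{\rho}$, since it is the inequality $g(\alpha\bm{\rho}) \leq g_{\bm{\theta}^{\circ}}(\alpha\bm{\rho})$ that is needed. Finiteness of $\B$ is what guarantees the minima are attained, so these selections are legitimate.
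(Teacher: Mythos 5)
Your proposal is correct and follows essentially the same route as the paper: identify $f_i(\bm{\rho}_{-i})$ with $f_i(\bm{\rho}_{-i},\bm{\theta}^{*})$, use Lemma~\ref{lma:fi_theta} to rewrite it as $\min_{\bm{\theta}\in\B}f_i(\bm{\rho}_{-i},\bm{\theta})$, and invoke Lemma~\ref{lma:sif} together with the fact that a pointwise minimum of finitely many SIFs is an SIF. The only difference is that you verify that last fact directly (correctly, including the care needed in choosing the minimizer at $\bm{\rho}$ for the scalability step), whereas the paper simply cites it from the literature.
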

\begin{proof}

  First, note that
  $f_i(\bm{\rho}_{-i})=f_i(\bm{\rho}_{-i},\bm{\theta}^{*})$, by the
  definitions of the two functions $f_i(\bm{\rho}_{-i})$ and
  $f_i(\bm{\rho}_{-i},\bm{\theta})$, and $\bm{\theta}^{*}$. Second, by
  Lemma~\ref{lma:fi_theta}, we have
  $f_i(\bm{\rho}_{-i},\bm{\theta}^{*}) =
  \min_{\bm{\theta}\in\mathcal{B}}f_i(\bm{\rho}_{-i},\bm{\theta})$.
  Therefore, we conclude
  $f_i(\bm{\rho}_{-i})=\min_{\bm{\theta}\in\mathcal{B}}f_i(\bm{\rho}_{-i},\bm{\theta})$. The theorem follows then from the fact that the minimum of finitely many SIFs is also an SIF~\cite{414651}.
\end{proof}

The following corollary shows an algorithmic framework for optimally
solving problem~\eqref{eq:minF}. Briefly, one only needs to apply
fixed-point iterations on all $f_i$ ($i\in\I$) to reach 
optimality. Given cell loads $\bm{\rho}_{-i}$, evaluating
$f_i(\bm{\rho}_{-i})$ ($i\in\I$) submits to a single-cell load
minimization problem.

\begin{corollary}
Assume that problem~\eqref{eq:minF} has a solution, then
the iterations $\bm{\rho}^{(k+1)}=\vec{f}(\bm{\rho}^{(k)})$, with arbitrary starting point $\bm{\rho}^{(0)}$ ($\bm{\rho}^{(0)}\geq \vec{0}$), converge to a unique fixed-point $\bm{\rho}^{*}$, such that $\bm{\rho}^{*}=\vec{f}(\bm{\rho}^{*})$.
Let $\vec{q}^{*},\vec{x}^{*},\vec{w}^{*},\bm{\theta}^{*}$ be the solution obtained by solving the problems $f_i(\bm{\rho}^{*}_{-i})$ for all $i\in\I$. Then for problem~\eqref{eq:minF}, we have
\begin{enumerate}
\item The optimal solution is $\vec{q}^{*},\vec{x}^{*},\vec{w}^{*},\bm{\theta}^{*}$.
\item The optimal objective value is $F(\rho_1^{*},\rho_2^{*},\ldots,\rho_n^{*})$.
\end{enumerate}
\label{thm:opt}
\end{corollary}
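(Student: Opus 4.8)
The plan is to combine the three previously established facts---Theorem~\ref{thm:sif} (each $f_i$ is an SIF), the general fixed-point theory for SIFs from~\cite{414651}, and Lemma~\ref{lma:fi_theta} (the correct decoding order $\bm{\theta}^{*}$ is the minimizer)---into a statement about the full multi-cell problem~\eqref{eq:minF}. The first step is to assemble the vector map $\vec{f} = [f_1,\dots,f_n]$ and observe that a vector of SIFs inherits the scalability and monotonicity properties coordinate-wise, so the standard result of Yates~\cite{414651} applies: if a fixed point exists, it is unique, and the iteration $\bm{\rho}^{(k+1)} = \vec{f}(\bm{\rho}^{(k)})$ converges to it from any non-negative starting point. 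The existence of a fixed point is exactly where the hypothesis ``problem~\eqref{eq:minF} has a solution'' is used---I would argue that any feasible point of~\eqref{eq:minF} gives a $\bm{\rho}$ that dominates $\vec{f}(\bm{\rho})$ coordinate-wise (since $f_i$ minimizes $\rho_i$ over the cell-$i$ constraints with the other loads fixed), and an SIF map that is dominated by some point necessarily has a fixed point by~\cite{414651}.

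Next I would establish part~(1), that the tuple $\langle \vec{q}^{*},\vec{x}^{*},\vec{w}^{*},\bm{\theta}^{*}\rangle$ recovered by solving the single-cell problems $f_i(\bm{\rho}^{*}_{-i})$ is feasible and optimal for~\eqref{eq:minF}. Feasibility: constraints~\eqref{eq:minF-d_j}--\eqref{eq:minF-p_i} and the decoding-order constraints~\eqref{eq:minF-w_h-w_j}--\eqref{eq:minF-theta_hj} hold cell-by-cell by construction of $f_i$ in~\eqref{eq:fi2}, and the coupling constraint~\eqref{eq:minF-rho_i} together with the $w_j$ definitions~\eqref{eq:minF-w_j} is satisfied precisely because $\bm{\rho}^{*}$ is a fixed point---the load that cell $i$ produces as output equals the load assumed for cell $i$ as input to every other cell's subproblem. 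Optimality: for any feasible solution of~\eqref{eq:minF} with load vector $\bm{\rho}$, monotonicity of $\vec{f}$ and the dominance observation above give $\bm{\rho} \geq \vec{f}(\bm{\rho}) \geq \vec{f}^{(2)}(\bm{\rho}) \geq \cdots \to \bm{\rho}^{*}$, hence $\bm{\rho} \geq \bm{\rho}^{*}$ element-wise; since $F$ is monotonically increasing in each $\rho_i$, we get $F(\bm{\rho}) \geq F(\bm{\rho}^{*})$, which yields part~(2) as well.

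The one subtlety I would be careful about is the role of Lemma~\ref{lma:fi_theta}: the subproblem $f_i$ in~\eqref{eq:fi2} keeps $\bm{\theta}$ as a variable rather than fixing it to $\bm{\theta}^{*}$, so I must point out that for any fixed $\bm{\rho}_{-i}$ the decoding-order constraints~\eqref{eq:minF-w_h-w_j}--\eqref{eq:minF-theta_hj} pin $\bm{\theta}$ down to $\bm{\theta}^{*}$ anyway, so $f_i(\bm{\rho}_{-i}) = f_i(\bm{\rho}_{-i},\bm{\theta}^{*})$ and Theorem~\ref{thm:sif} genuinely applies to the map whose fixed point we are taking. This is exactly the point at which the contribution of the note---that the decoding order may legitimately change across iterations without breaking anything---gets used: each iteration re-derives its own $\bm{\theta}^{*}$ from the current interference, and Lemma~\ref{lma:fi_theta} guarantees this incurs no loss, while the SIF property guarantees the sequence of loads still converges. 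Finally I would note the promised post-processing remark: if the cell-load cap $\rho_i \leq \bar\rho$ is present, one simply checks $\bm{\rho}^{*} \leq \bar\rho\,\vec{1}$ after convergence, since $\bm{\rho}^{*}$ is the coordinate-wise smallest feasible load vector and hence the cap is satisfiable if and only if it is satisfied at $\bm{\rho}^{*}$.

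The main obstacle I anticipate is not any single hard estimate but rather getting the existence-of-fixed-point argument and the chain $\bm{\rho} \geq \vec{f}(\bm{\rho}) \geq \cdots \to \bm{\rho}^{*}$ stated cleanly with the right monotonicity direction, and making the bookkeeping between ``$f_i$ with $\bm{\theta}$ free'' and ``$f_i(\cdot,\bm{\theta}^{*})$'' fully rigorous so that the cited SIF convergence theorem is invoked on exactly the right operator.
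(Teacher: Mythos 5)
Your proposal is correct and follows essentially the same route as the paper, which simply notes that the corollary ``can be straightforwardly derived based on Theorem~\ref{thm:sif}'' and defers the details to the SIF fixed-point arguments of \cite[Theorem 3]{8254922} and \cite[Theorem 6]{8353846}; your write-up supplies exactly those standard steps (existence via dominance by a feasible point, uniqueness and convergence from Yates, and optimality via the monotone chain $\bm{\rho}\geq\vec{f}(\bm{\rho})\geq\cdots\to\bm{\rho}^{*}$ together with monotonicity of $F$). The bookkeeping you flag between $f_i(\bm{\rho}_{-i})$ and $f_i(\bm{\rho}_{-i},\bm{\theta}^{*})$ is likewise exactly the point the paper makes after \eqref{eq:fi2} and in Theorem~\ref{thm:sif}.
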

The proof of Corollary~\ref{thm:opt} can be straightforwardly derived based on Theorem~\ref{thm:sif}. One can refer to \cite[Theorem 3]{8254922} or \cite[Theorem 6]{8353846} for more details. 

By Corollary~\ref{thm:opt}, \eqref{eq:minF} has a unique solution
$\bm{\rho}^{*} = [\rho^*_1, \dots, \rho^*_n]$.  Therefore, with
post-processing, one can determine if $\bm{\rho}^{*}$ is within the
resource limit, i.e., if $\rho_i^* \leq {\bar \rho},~i \in \I$;
the violation of the constraint for any cell means the problem is infeasible.

\section{Discussion}
\label{sec:discussion}
This section discusses the potential application of our derived results in Section~\ref{sec:main_results}
along three dimensions: \textit{problem formulation, tractability, and optimality}.

\subsection{Decoding Order Constraints}
\label{subsec:order?}

It is worth noting that constraints~\eqref{eq:minF-w_h-w_j} are redundant for~\eqref{eq:minF}. Consider the formulation below.
\[
\min_{\vec{q},\vec{x},\bm{\rho},\vec{w},\bm{\theta}\geq\vec{0}} F(\rho_1,\rho_2,\ldots,\rho_n)\text{ s.t. } \eqref{eq:minF-d_j}\text{--}\eqref{eq:minF-rho_i},~\eqref{eq:minF-theta_hj-theta_jh},\text{ and }\eqref{eq:minF-theta_hj}
\]
Theorem~\ref{thm:region} along with the analysis in Section~\ref{sec:main_results} indeed reveals that at its optimum, $\bm{\theta}=\bm{\theta}^{*}$. Namely, for any $\theta_{hj}$ at the optimum, if $\theta_{hj}=1$, we must have $w_h\leq w_j$, meaning that $\theta_{hj}$ satisfies $\theta_{hj} \geq \min\{1,w_j - w_h\}$. Hence the non-linear constraints \eqref{eq:minF-w_h-w_j} are redundant and can be removed from the formulation. 

\subsection{Tractability of~\eqref{eq:minF}}
\label{subsec:tractability?}

We remark that even though~\eqref{eq:minF} is for  multi-cell scenarios, the difficulty indeed lies in its corresponding single-cell load minimization problems, i.e., \eqref{eq:fi2}. By Lemma~\ref{lma:fi_theta}  we reach the optimum at $\bm{\theta}^{*}$.
 Once~$f_i(\bm{\rho}_{-i},\bm{\theta}^{*})$ can be solved to optimality, then as pointed out by Corollary~\ref{thm:opt}, the optimum of~\eqref{eq:minF} can be straightforwardly obtained. There are some special cases of \eqref{eq:fi_theta} that submit to a polynomial-time solution, briefly discussed below. 

 If the power allocation $\vec{q}$ is fixed, then the single-cell load
 minimization problem \eqref{eq:fi} is a linear programming problem in
 $\vec{x}$ and $\vec{w}$ \cite{8254922}. We remark that
 \eqref{eq:fi_theta} can be reformulated to \eqref{eq:fi_theta_2} by
 using the successive rule in Section~\ref{subsec:rate-region}. As the
 second case, if the demand on each user group is given, then 
 variable $\vec{x}$ can be eliminated, and \eqref{eq:fi_theta_2}
 is a convex programming formulation\footnote{We remark that the
   convexity of constraints~\eqref{eq:fi_theta_2-pi} holds if
   $\bm{\theta}$ is set to be the correct decoding order). By
   Lemma~\ref{lma:fi_theta}, we know that this is the only case that
   needs to be taken into account.} of $\vec{c}$ and
 $\vec{w}$\cite{zhu2017optimal}.

 Consider another case where the number of UEs in each group is no
 more than two (i.e. $|\u|\leq 2$ for all $\u\in\U$, $i \in \I$), and there is no
 overlapping UE for any two selected groups. Then \eqref{eq:minF} can
 be solved optimally within polynomial time by combinatorial
 optimization \cite{8353846}. The basic idea is to bring the
 single-cell load optimization down to the group level, and then prove
 that the optimal group selection amounts to solving a maximum
 matching problem.

For the general case of~\eqref{eq:fi_theta}, whether it is tractable or not, remains open, suggesting future research to be done along this direction.

\subsection{A Comment on \cite{8254922,8353846}}
\label{subsec:optimality?}

In \cite{8254922,8353846}, the proposed solution has guaranteed
convergence if some restrictions, \cite[Lemma~1]{8254922} and
\cite[Lemma~1]{8353846}, are imposed. The results derived in this
technical note provide a complementary theoretical insight, namely,
the restrictions can be dropped without any loss of optimality or
convergence.

\section{Numerical Results}
\label{sec:numerical}

In this section, we provide performance comparison of OMA and NOMA,
as well as numerical validation of our theoretical findings. We use a
cellular network of 19 cells, with wrap-around for eliminating edge
effects. The simulation settings are given in
\tablename~\ref{tab:sim}.

\begin{table}[!ht]
\centering
\caption{Simulation Parameters.}
\begin{tabular}{ll}
\toprule
\textbf{Parameter} & \textbf{Value} \\
Cell radius & $500$ m\\
Carrier frequency & $2$ GHz \\
Total bandwidth & $20$ MHz\\
Cell load limit $\bar{\rho}$ & $1.0$ \\
Path loss model & COST-231-HATA \\
Shadowing (Log-normal) & $6$ dB standard deviation\\
Fading & Rayleigh flat fading \\
Noise power spectral density & $-173$ dBm/Hz \\
RB power $p_i$ ($i\in\I$) & $800$ mW \\
Convergence tolerance ($\epsilon$) & $10^{-4}$ \\
\bottomrule
\end{tabular}
\label{tab:sim}	
\end{table}

The sum load minimization problem, i.e.,
$F(\bm{\rho})=\sum_{i\in\I}\rho_i$ in formulation~\eqref{eq:minF},
can be used to examine the maximum throughput supported, for uniform
user demand. A given demand level (of all users) can be supported, if
and only if load minimization leads to load levels that are all within
the load limit.  Hence the maximum throughput can be computed by solving
\eqref{eq:minF} repeatedly with a bi-section search on the demand level.  For
this experiment, there are 30 UEs randomly distributed inside each
cell. Each user group $\u$ contains two UEs.
  We remark that both OMA and NOMA are solved
to optimality.  In OMA, the optimum is obtained by using the method in
\cite{6204009}.  In NOMA, we use the algorithm proposed in
\cite{8353846}. However, the constraint imposed by \cite[Lemma
1]{8353846} on the candidate user groups is dropped, and, by our
theoretical results, there is no loss of global optimality.
  
\pgfplotsset{compat=1.11,
        /pgfplots/ybar legend/.style={
        /pgfplots/legend image code/.code={%
        \draw[##1,/tikz/.cd,bar width=3pt,yshift=-0.2em,bar shift=0pt]
                plot coordinates {(0cm,0.8em)};},
},
}
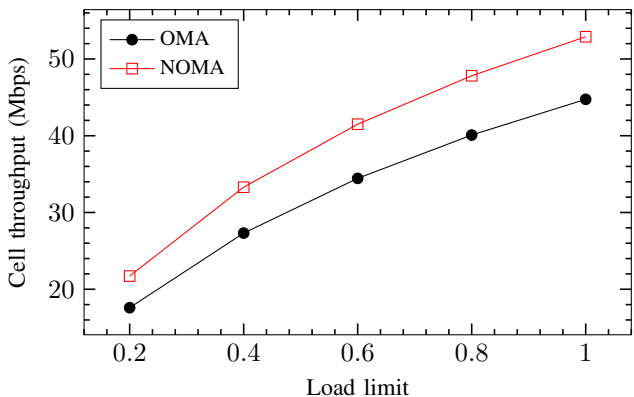
\begin{figure}[!h] 
\centering 
\begin{tikzpicture}
\begin{axis}[
	xlabel={Load limit},
	ylabel={Cell throughput (Mbps)},
	label style = {font=\fontsize{9pt}{10pt}\selectfont},
	legend cell align={left},
	legend pos = north west,
	legend style = {font=\fontsize{8pt}{10pt}\selectfont},
	axis background/.style={fill=white},
minor x tick num=4,
minor y tick num=4,
major tick length=0.15cm,
minor tick length=0.075cm,
tick style={semithick,color=black},
	height=0.667\linewidth,
	width=\linewidth,
]
	
\addplot [ mark=*, color=black] coordinates {
	(0.2, 17.5968)
	(0.4, 27.3131)
	(0.6, 34.4448)
	(0.8, 40.0885)
	(1.0, 44.7426)
};

\addplot [ mark= square, color=red] coordinates{
	(0.2, 21.7188)
	(0.4, 33.2947)
	(0.6, 41.5001)
	(0.8, 47.8128)
	(1.0, 52.8954)
};

\legend{OMA, NOMA}
\end{axis}
\end{tikzpicture}
\caption{Cell throughput (in Mbps)
in function of load limit.}
\label{fig:load} 
\end{figure}

\figurename~\ref{fig:load} shows the performance 
in terms of the throughput (in Mbps) per cell, for various
levels of the cell load limit. We remark that in the derivations
in the previous sections, the data is in nats, merely to simplify
the mathematical expressions. Here, for convenience, we show the results
in bits. Note that all users of a cell, including those located
at cell edge, are provided with the same data rate.
From \figurename~\ref{fig:load}, NOMA leads to considerably higher throughput
than OMA, while consuming the same amount of resource. The performance
improvement is about 20\% or higher. Moreover, even though the absolute difference
increases with the cell load limit, the relative difference is in fact higher
when the load limit is low, that is, NOMA offers more performance
gain when the use of resource is more constrained. 

\pgfplotsset{compat=1.11,
        /pgfplots/ybar legend/.style={
        /pgfplots/legend image code/.code={%
        \draw[##1,/tikz/.cd,bar width=3pt,yshift=-0.2em,bar shift=0pt]
                plot coordinates {(0cm,0.8em)};},
},
}
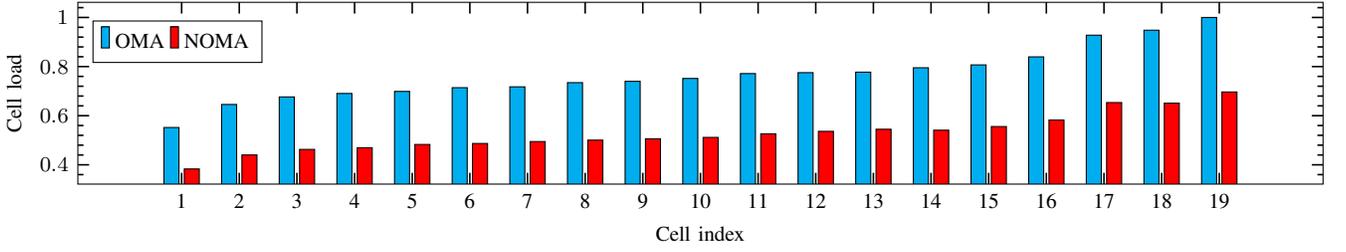
\begin{figure*}[!ht]
\centering
\begin{tikzpicture}
\begin{axis}[
	font=\fontsize{8pt}{10pt}\selectfont,
    ybar,
    legend style={at={(0.08,0.9)},
      anchor=north,legend columns=-1},
    ylabel={Cell load},
    xlabel={Cell index},
    symbolic x coords={1,2,3,4,5,6,7,8,9,10,11,12,13,14,15,16,17,18,19},
    xtick=data,
    	height=4cm,
	width=\textwidth,
	bar width = 0.2cm,
	xtick align=inside,
	minor y tick num=4,
	major tick length=0.15cm,
	minor tick length=0.075cm,
	tick style={semithick,color=black},
    ]
\addplot [draw=black, fill=cyan] coordinates { (1,0.5518861459993285) (2,0.6456801704701509) (3,0.6765038792178337) (4,0.6904469122376065) (5,0.69883609666146) (6,0.7140220134439574) (7,0.7175419491828879) (8,0.734539176494171) (9,0.7400260501227063) (10,0.7516544758359655) (11,0.7714676520929664) (12,0.7752305327497773) (13,0.7773449197257659) (14,0.7951420393176793) (15,0.806836912680391) (16,0.8395434052240811) (17,0.9277948087169523) (18,0.948015122217504) (19,0.9999976880691546)};
\addplot [draw=black, fill=red] coordinates {(1,0.3831101100077303) (2,0.4400796300584208) (3,0.46269308927566954) (4,0.4696107789002239) (5,0.4827656334496884) (6,0.4868277144342972) (7,0.4947032922736718) (8,0.5005532965317314) (9,0.5060115235843202) (10,0.511560077301295) (11,0.5262871092337478) (12,0.5362236826548274) (13,0.5447604243315015) (14,0.540933229652313) (15,0.5557959259517387) (16,0.5822758280551295) (17,0.6534936481549145) (18,0.651100515117767) (19,0.6966427658475751)};

\legend{OMA, NOMA}
\end{axis}
\end{tikzpicture}
\caption{Cell load levels sorted in ascending order for NOMA, with load limit 1.0. }
\label{fig:cell-load}
\end{figure*}

It should be pointed out that not all cells are able to utilize 100\%
of the resource simultaneously to maximize the throughput.  To clarify
further this aspect, in \figurename~\ref{fig:cell-load} we show the
individual cell load levels at the maximum achievable OMA throughput,
with load limit being equal to one, i.e., the load levels of OMA and
NOMA corresponding the right-end of the two curves in
\figurename~\ref{fig:load}. Indeed, for OMA, one can observe that one
cell has reached the limit, whereas the other cells still have spare
resource. However, these cells are not able to consume the spare
resource, because doing so would lead to higher interference, as
captured by the load-coupling model, overloading the cell that
currently is fully loaded. While delivering the same throughput as
OMA, by NOMA the load is less than 70\% in all cells. As none of the
cells has exhausted its resource, additional throughput can be
offered.


\pgfplotsset{compat=1.11,
        /pgfplots/ybar legend/.style={
        /pgfplots/legend image code/.code={%
        \draw[##1,/tikz/.cd,bar width=3pt,yshift=-0.2em,bar shift=0pt]
                plot coordinates {(0cm,0.8em)};},
},
}
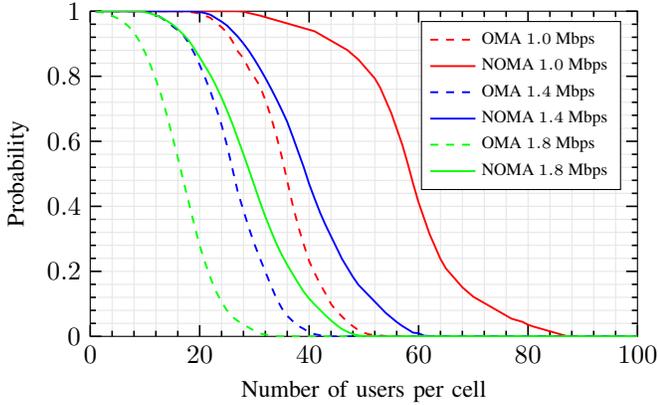
\begin{figure}[!ht] 
\centering 
\begin{tikzpicture}
\begin{axis}[
	xlabel={Number of users per cell},
	ylabel={Probability},
	label style = {font=\fontsize{9pt}{10pt}\selectfont},
	legend cell align={left},
	legend pos = north east,
	legend style = {font=\fontsize{6.5pt}{7pt}\selectfont},
	axis background/.style={fill=white},
	grid=both,
	grid style={gray!20},
    xmin=0,xmax=100,
    ymin=0,ymax=1,
minor x tick num=4,
minor y tick num=4,
major tick length=0.15cm,
minor tick length=0.075cm,
tick style={semithick,color=black},
	height=0.667\linewidth,
	width=\linewidth,
]
	
\addplot [line width=0.25mm, color=red, dashed] coordinates {
( 1.0 , 1 )
( 2.0 , 1 )
( 3.0 , 1 )
( 4.0 , 1 )
( 5.0 , 1 )
( 6.0 , 1 )
( 7.0 , 1 )
( 8.0 , 1 )
( 9.0 , 1 )
( 10.0 , 1 )
( 11.0 , 1 )
( 12.0 , 1 )
( 13.0 , 1 )
( 14.0 , 1 )
( 15.0 , 1 )
( 16.0 , 1 )
( 17.0 , 0.99936 )
( 18.0 , 0.99769 )
( 19.0 , 0.99602 )
( 20.0 , 0.99268 )
( 21.0 , 0.98599 )
( 22.0 , 0.9793 )
( 23.0 , 0.96425 )
( 24.0 , 0.95087 )
( 25.0 , 0.93415 )
( 26.0 , 0.90739 )
( 27.0 , 0.87562 )
( 28.0 , 0.85562 )
( 29.0 , 0.82218 )
( 30.0 , 0.79706 )
( 31.0 , 0.77027 )
( 32.0 , 0.72679 )
( 33.0 , 0.67662 )
( 34.0 , 0.61809 )
( 35.0 , 0.55287 )
( 36.0 , 0.47762 )
( 37.0 , 0.40404 )
( 38.0 , 0.33882 )
( 39.0 , 0.28364 )
( 40.0 , 0.2318 )
( 41.0 , 0.19501 )
( 42.0 , 0.16157 )
( 43.0 , 0.13147 )
( 44.0 , 0.0997 )
( 45.0 , 0.07796 )
( 46.0 , 0.06124 )
( 47.0 , 0.04452 )
( 48.0 , 0.03114 )
( 49.0 , 0.01943 )
( 50.0 , 0.01107 )
( 51.0 , 0.00605 )
( 52.0 , 0.00438 )
( 53.0 , 0.00271 )
( 54.0 , 0.00103 )
( 55.0 , 0 )
( 56.0 , 0 )
( 57.0 , 0 )
( 58.0 , 0 )
( 59.0 , 0 )
( 60.0 , 0 )
( 61.0 , 0 )
( 62.0 , 0 )
( 63.0 , 0 )
( 64.0 , 0 )
( 65.0 , 0 )
( 66.0 , 0 )
( 67.0 , 0 )
( 68.0 , 0 )
( 69.0 , 0 )
( 70.0 , 0 )
( 71.0 , 0 )
( 72.0 , 0 )
( 73.0 , 0 )
( 74.0 , 0 )
( 75.0 , 0 )
( 76.0 , 0 )
( 77.0 , 0 )
( 78.0 , 0 )
( 79.0 , 0 )
( 80.0 , 0 )
( 81.0 , 0 )
( 82.0 , 0 )
( 83.0 , 0 )
( 84.0 , 0 )
( 85.0 , 0 )
( 86.0 , 0 )
( 87.0 , 0 )
( 88.0 , 0 )
( 89.0 , 0 )
( 90.0 , 0 )
( 91.0 , 0 )
( 92.0 , 0 )
( 93.0 , 0 )
( 94.0 , 0 )
( 95.0 , 0 )
( 96.0 , 0 )
( 97.0 , 0 )
( 98.0 , 0 )
( 99.0 , 0 )
( 100.0 , 0 )
};

\addplot [line width=0.25mm, color=red] coordinates {
( 1.0 , 1 )
( 2.0 , 1 )
( 3.0 , 1 )
( 4.0 , 1 )
( 5.0 , 1 )
( 6.0 , 1 )
( 7.0 , 1 )
( 8.0 , 1 )
( 9.0 , 1 )
( 10.0 , 1 )
( 11.0 , 1 )
( 12.0 , 1 )
( 13.0 , 1 )
( 14.0 , 1 )
( 15.0 , 1 )
( 16.0 , 1 )
( 17.0 , 1 )
( 18.0 , 1 )
( 19.0 , 1 )
( 20.0 , 1 )
( 21.0 , 1 )
( 22.0 , 1 )
( 23.0 , 1 )
( 24.0 , 1 )
( 25.0 , 1 )
( 26.0 , 1 )
( 27.0 , 0.9998 )
( 28.0 , 0.9988 )
( 29.0 , 0.99428 )
( 30.0 , 0.99045 )
( 31.0 , 0.98662 )
( 32.0 , 0.98179 )
( 33.0 , 0.97696 )
( 34.0 , 0.97213 )
( 35.0 , 0.9673 )
( 36.0 , 0.96247 )
( 37.0 , 0.95764 )
( 38.0 , 0.95281 )
( 39.0 , 0.94798 )
( 40.0 , 0.94315 )
( 41.0 , 0.93832 )
( 42.0 , 0.92866 )
( 43.0 , 0.919 )
( 44.0 , 0.90934 )
( 45.0 , 0.89968 )
( 46.0 , 0.89002 )
( 47.0 , 0.88036 )
( 48.0 , 0.86587 )
( 49.0 , 0.85138 )
( 50.0 , 0.83206 )
( 51.0 , 0.81274 )
( 52.0 , 0.79342 )
( 53.0 , 0.76443 )
( 54.0 , 0.72578 )
( 55.0 , 0.68713 )
( 56.0 , 0.63882 )
( 57.0 , 0.59051 )
( 58.0 , 0.53254 )
( 59.0 , 0.46974 )
( 60.0 , 0.41177 )
( 61.0 , 0.36346 )
( 62.0 , 0.31515 )
( 63.0 , 0.2765 )
( 64.0 , 0.23785 )
( 65.0 , 0.20886 )
( 66.0 , 0.18954 )
( 67.0 , 0.17022 )
( 68.0 , 0.1509 )
( 69.0 , 0.13641 )
( 70.0 , 0.12192 )
( 71.0 , 0.11226 )
( 72.0 , 0.1026 )
( 73.0 , 0.09294 )
( 74.0 , 0.08328 )
( 75.0 , 0.07362 )
( 76.0 , 0.06396 )
( 77.0 , 0.0543 )
( 78.0 , 0.04947 )
( 79.0 , 0.04464 )
( 80.0 , 0.03581 )
( 81.0 , 0.03015 )
( 82.0 , 0.02532 )
( 83.0 , 0.02049 )
( 84.0 , 0.01566 )
( 85.0 , 0.01083 )
( 86.0 , 0.006 )
( 87.0 , 0.00111 )
( 88.0 , 0.00111 )
( 89.0 , 0.00111 )
( 90.0 , 0 )
( 91.0 , 0 )
( 92.0 , 0 )
( 93.0 , 0 )
( 94.0 , 0 )
( 95.0 , 0 )
( 96.0 , 0 )
( 97.0 , 0 )
( 98.0 , 0 )
( 99.0 , 0 )
( 100.0 , 0 )
};

\addplot [line width=0.25mm, color=blue, dashed] coordinates {
( 1.0 , 1 )
( 2.0 , 1 )
( 3.0 , 1 )
( 4.0 , 1 )
( 5.0 , 1 )
( 6.0 , 1 )
( 7.0 , 1 )
( 8.0 , 1 )
( 9.0 , 1 )
( 10.0 , 0.99969 )
( 11.0 , 0.99607 )
( 12.0 , 0.99064 )
( 13.0 , 0.97977 )
( 14.0 , 0.96709 )
( 15.0 , 0.95441 )
( 16.0 , 0.94173 )
( 17.0 , 0.92724 )
( 18.0 , 0.90188 )
( 19.0 , 0.87108 )
( 20.0 , 0.83485 )
( 21.0 , 0.79499 )
( 22.0 , 0.75151 )
( 23.0 , 0.7026 )
( 24.0 , 0.65006 )
( 25.0 , 0.58665 )
( 26.0 , 0.51419 )
( 27.0 , 0.44173 )
( 28.0 , 0.38738 )
( 29.0 , 0.33303 )
( 30.0 , 0.28412 )
( 31.0 , 0.24064 )
( 32.0 , 0.19897 )
( 33.0 , 0.1573 )
( 34.0 , 0.11744 )
( 35.0 , 0.08664 )
( 36.0 , 0.06309 )
( 37.0 , 0.04679 )
( 38.0 , 0.0323 )
( 39.0 , 0.02324 )
( 40.0 , 0.01237 )
( 41.0 , 0.00694 )
( 42.0 , 0.00332 )
( 43.0 , 0.0015 )
( 44.0 , 0 )
( 45.0 , 0 )
( 46.0 , 0 )
( 47.0 , 0 )
( 48.0 , 0 )
( 49.0 , 0 )
( 50.0 , 0 )
( 51.0 , 0 )
( 52.0 , 0 )
( 53.0 , 0 )
( 54.0 , 0 )
( 55.0 , 0 )
( 56.0 , 0 )
( 57.0 , 0 )
( 58.0 , 0 )
( 59.0 , 0 )
( 60.0 , 0 )
( 61.0 , 0 )
( 62.0 , 0 )
( 63.0 , 0 )
( 64.0 , 0 )
( 65.0 , 0 )
( 66.0 , 0 )
( 67.0 , 0 )
( 68.0 , 0 )
( 69.0 , 0 )
( 70.0 , 0 )
( 71.0 , 0 )
( 72.0 , 0 )
( 73.0 , 0 )
( 74.0 , 0 )
( 75.0 , 0 )
( 76.0 , 0 )
( 77.0 , 0 )
( 78.0 , 0 )
( 79.0 , 0 )
( 80.0 , 0 )
( 81.0 , 0 )
( 82.0 , 0 )
( 83.0 , 0 )
( 84.0 , 0 )
( 85.0 , 0 )
( 86.0 , 0 )
( 87.0 , 0 )
( 88.0 , 0 )
( 89.0 , 0 )
( 90.0 , 0 )
( 91.0 , 0 )
( 92.0 , 0 )
( 93.0 , 0 )
( 94.0 , 0 )
( 95.0 , 0 )
( 96.0 , 0 )
( 97.0 , 0 )
( 98.0 , 0 )
( 99.0 , 0 )
( 100.0 , 0 )
};

\addplot [line width=0.25mm, color=blue] coordinates {
( 1.0 , 1 )
( 2.0 , 1 )
( 3.0 , 1 )
( 4.0 , 1 )
( 5.0 , 1 )
( 6.0 , 1 )
( 7.0 , 1 )
( 8.0 , 1 )
( 9.0 , 1 )
( 10.0 , 1 )
( 11.0 , 1 )
( 12.0 , 1 )
( 13.0 , 1 )
( 14.0 , 1 )
( 15.0 , 1 )
( 16.0 , 1 )
( 17.0 , 1 )
( 18.0 , 1 )
( 19.0 , 0.99928 )
( 20.0 , 0.99755 )
( 21.0 , 0.99408 )
( 22.0 , 0.98888 )
( 23.0 , 0.97848 )
( 24.0 , 0.96808 )
( 25.0 , 0.95595 )
( 26.0 , 0.93862 )
( 27.0 , 0.92129 )
( 28.0 , 0.90049 )
( 29.0 , 0.87796 )
( 30.0 , 0.85196 )
( 31.0 , 0.82423 )
( 32.0 , 0.79477 )
( 33.0 , 0.76357 )
( 34.0 , 0.73064 )
( 35.0 , 0.69598 )
( 36.0 , 0.65958 )
( 37.0 , 0.61452 )
( 38.0 , 0.56773 )
( 39.0 , 0.52094 )
( 40.0 , 0.46895 )
( 41.0 , 0.42389 )
( 42.0 , 0.38056 )
( 43.0 , 0.34243 )
( 44.0 , 0.30777 )
( 45.0 , 0.27311 )
( 46.0 , 0.23845 )
( 47.0 , 0.21072 )
( 48.0 , 0.18299 )
( 49.0 , 0.15873 )
( 50.0 , 0.1414 )
( 51.0 , 0.12407 )
( 52.0 , 0.10674 )
( 53.0 , 0.08941 )
( 54.0 , 0.07208 )
( 55.0 , 0.05822 )
( 56.0 , 0.04436 )
( 57.0 , 0.03223 )
( 58.0 , 0.02183 )
( 59.0 , 0.01144 )
( 60.0 , 0.00971 )
( 61.0 , 0.002 )
( 62.0 , 0.00101 )
( 63.0 , 0 )
( 64.0 , 0 )
( 65.0 , 0 )
( 66.0 , 0 )
( 67.0 , 0 )
( 68.0 , 0 )
( 69.0 , 0 )
( 70.0 , 0 )
( 71.0 , 0 )
( 72.0 , 0 )
( 73.0 , 0 )
( 74.0 , 0 )
( 75.0 , 0 )
( 76.0 , 0 )
( 77.0 , 0 )
( 78.0 , 0 )
( 79.0 , 0 )
( 80.0 , 0 )
( 81.0 , 0 )
( 82.0 , 0 )
( 83.0 , 0 )
( 84.0 , 0 )
( 85.0 , 0 )
( 86.0 , 0 )
( 87.0 , 0 )
( 88.0 , 0 )
( 89.0 , 0 )
( 90.0 , 0 )
( 91.0 , 0 )
( 92.0 , 0 )
( 93.0 , 0 )
( 94.0 , 0 )
( 95.0 , 0 )
( 96.0 , 0 )
( 97.0 , 0 )
( 98.0 , 0 )
( 99.0 , 0 )
( 100.0 , 0 )
};

\addplot [line width=0.25mm, color=green, dashed] coordinates {
( 1.0 , 0.99964 )
( 2.0 , 0.99693 )
( 3.0 , 0.99286 )
( 4.0 , 0.98608 )
( 5.0 , 0.97794 )
( 6.0 , 0.96709 )
( 7.0 , 0.95352 )
( 8.0 , 0.93452 )
( 9.0 , 0.90738 )
( 10.0 , 0.87346 )
( 11.0 , 0.83411 )
( 12.0 , 0.78933 )
( 13.0 , 0.73913 )
( 14.0 , 0.68079 )
( 15.0 , 0.61566 )
( 16.0 , 0.55053 )
( 17.0 , 0.47997 )
( 18.0 , 0.41213 )
( 19.0 , 0.34429 )
( 20.0 , 0.27916 )
( 21.0 , 0.22624 )
( 22.0 , 0.17875 )
( 23.0 , 0.13804 )
( 24.0 , 0.10548 )
( 25.0 , 0.0797 )
( 26.0 , 0.05935 )
( 27.0 , 0.04714 )
( 28.0 , 0.03629 )
( 29.0 , 0.02679 )
( 30.0 , 0.01729 )
( 31.0 , 0.00915 )
( 32.0 , 0.00508 )
( 33.0 , 0.00237 )
( 34.0 , 1.00E-03 )
( 35.0 , 0 )
( 36.0 , 0 )
( 37.0 , 0 )
( 38.0 , 0 )
( 39.0 , 0 )
( 40.0 , 0 )
( 41.0 , 0 )
( 42.0 , 0 )
( 43.0 , 0 )
( 44.0 , 0 )
( 45.0 , 0 )
( 46.0 , 0 )
( 47.0 , 0 )
( 48.0 , 0 )
( 49.0 , 0 )
( 50.0 , 0 )
( 51.0 , 0 )
( 52.0 , 0 )
( 53.0 , 0 )
( 54.0 , 0 )
( 55.0 , 0 )
( 56.0 , 0 )
( 57.0 , 0 )
( 58.0 , 0 )
( 59.0 , 0 )
( 60.0 , 0 )
( 61.0 , 0 )
( 62.0 , 0 )
( 63.0 , 0 )
( 64.0 , 0 )
( 65.0 , 0 )
( 66.0 , 0 )
( 67.0 , 0 )
( 68.0 , 0 )
( 69.0 , 0 )
( 70.0 , 0 )
( 71.0 , 0 )
( 72.0 , 0 )
( 73.0 , 0 )
( 74.0 , 0 )
( 75.0 , 0 )
( 76.0 , 0 )
( 77.0 , 0 )
( 78.0 , 0 )
( 79.0 , 0 )
( 80.0 , 0 )
( 81.0 , 0 )
( 82.0 , 0 )
( 83.0 , 0 )
( 84.0 , 0 )
( 85.0 , 0 )
( 86.0 , 0 )
( 87.0 , 0 )
( 88.0 , 0 )
( 89.0 , 0 )
( 90.0 , 0 )
( 91.0 , 0 )
( 92.0 , 0 )
( 93.0 , 0 )
( 94.0 , 0 )
( 95.0 , 0 )
( 96.0 , 0 )
( 97.0 , 0 )
( 98.0 , 0 )
( 99.0 , 0 )
( 100.0 , 0 )
};

\addplot [line width=0.25mm, color=green] coordinates {
( 1.0 , 1 )
( 2.0 , 1 )
( 3.0 , 1 )
( 4.0 , 1 )
( 5.0 , 1 )
( 6.0 , 1 )
( 7.0 , 1 )
( 8.0 , 1 )
( 9.0 , 1 )
( 10.0 , 0.9972 )
( 11.0 , 0.993 )
( 12.0 , 0.9874 )
( 13.0 , 0.9804 )
( 14.0 , 0.9706 )
( 15.0 , 0.9594 )
( 16.0 , 0.94539 )
( 17.0 , 0.92858 )
( 18.0 , 0.91037 )
( 19.0 , 0.88516 )
( 20.0 , 0.85855 )
( 21.0 , 0.83054 )
( 22.0 , 0.80253 )
( 23.0 , 0.77032 )
( 24.0 , 0.73391 )
( 25.0 , 0.69469 )
( 26.0 , 0.65267 )
( 27.0 , 0.60645 )
( 28.0 , 0.56023 )
( 29.0 , 0.51261 )
( 30.0 , 0.46359 )
( 31.0 , 0.41457 )
( 32.0 , 0.36835 )
( 33.0 , 0.32633 )
( 34.0 , 0.28571 )
( 35.0 , 0.2507 )
( 36.0 , 0.21989 )
( 37.0 , 0.19188 )
( 38.0 , 0.16387 )
( 39.0 , 0.13726 )
( 40.0 , 0.11625 )
( 41.0 , 0.09804 )
( 42.0 , 0.08123 )
( 43.0 , 0.06442 )
( 44.0 , 0.04901 )
( 45.0 , 0.035 )
( 46.0 , 0.02239 )
( 47.0 , 0.01399 )
( 48.0 , 0.00699 )
( 49.0 , 0.00279 )
( 50.0 , 0.0014 )
( 51.0 , 0 )
( 52.0 , 0 )
( 53.0 , 0 )
( 54.0 , 0 )
( 55.0 , 0 )
( 56.0 , 0 )
( 57.0 , 0 )
( 58.0 , 0 )
( 59.0 , 0 )
( 60.0 , 0 )
( 61.0 , 0 )
( 62.0 , 0 )
( 63.0 , 0 )
( 64.0 , 0 )
( 65.0 , 0 )
( 66.0 , 0 )
( 67.0 , 0 )
( 68.0 , 0 )
( 69.0 , 0 )
( 70.0 , 0 )
( 71.0 , 0 )
( 72.0 , 0 )
( 73.0 , 0 )
( 74.0 , 0 )
( 75.0 , 0 )
( 76.0 , 0 )
( 77.0 , 0 )
( 78.0 , 0 )
( 79.0 , 0 )
( 80.0 , 0 )
( 81.0 , 0 )
( 82.0 , 0 )
( 83.0 , 0 )
( 84.0 , 0 )
( 85.0 , 0 )
( 86.0 , 0 )
( 87.0 , 0 )
( 88.0 , 0 )
( 89.0 , 0 )
( 90.0 , 0 )
( 91.0 , 0 )
( 92.0 , 0 )
( 93.0 , 0 )
( 94.0 , 0 )
( 95.0 , 0 )
( 96.0 , 0 )
( 97.0 , 0 )
( 98.0 , 0 )
( 99.0 , 0 )
( 100.0 , 0 )
};

\legend{OMA $1.0$ Mbps, NOMA $1.0$ Mbps, OMA $1.4$ Mbps, NOMA $1.4$ Mbps, OMA $1.8$ Mbps, NOMA $1.8$ Mbps}
\end{axis}
\end{tikzpicture}
\caption{This figure illustrates the cumulative density function of having all users supported in every cell. The demand values in the legend are for each user.}
\label{fig:cdf} 
\end{figure}

As our next of numerical study, we consider the number of users
that can be supported per cell by OMA and NOMA, for various demand
levels. To this end, for each demand level, we run 1,000 realizations
for different numbers of users per cell, and record whether or not OMA
and NOMA can simultaneously support all users. The results are then
collected to generate the cumulative density function (CDF) shown
in \figurename~\ref{fig:cdf}. For the lowest demand used, OMA can
support approximately 30 users in every cell with 90\% probability.
For NOMA, the corresponding number is close to 45 users, representing
an increase of 50\%. Similar amount of improvement is observed for the
higher demand levels. Moreover, for all the demand levels, when the
number of users per cell has reached the level such that the
probability of supporting all of them by OMA is virtually zero, NOMA still
has approximately 40\% probability of supporting this number and
higher. Hence the performance enhancement, in terms of the number of
users that can be accommodated by the network, is significant.

Next, we make a comparison of
spectral efficiency (in bps/Hz) of OMA and NOMA, with respect to the
proportion of cell-edge users. Cell-edge users are those such that the
distance to the home base station is at least 80\% of the cell radius.
The results are provided in \figurename~\ref{fig:spec}. Again, by
our theoretical findings, the numerical results here represent the
achievable performance, and hence the comparison is accurate.  From the
figure, the impact of  cell-edge users on performance is very apparent.
Specifically, when there is no cell-edge user at all, 
the spectral efficiency is 30\% higher for NOMA.  The
spectral efficiency drops then quickly with respect to the proportion
of cell-edge users, and the improvement offered by NOMA has a
diminishing trend\footnote{Even though in numbers, NOMA still delivered more
than 10\% higher efficiency with 20\% cell-edge users.}.
That is, much less improvement can be expected from NOMA
for scenarios where many users 
are at cell edge.

\pgfplotsset{compat=1.11,
        /pgfplots/ybar legend/.style={
        /pgfplots/legend image code/.code={%
        \draw[##1,/tikz/.cd,bar width=3pt,yshift=-0.2em,bar shift=0pt]
                plot coordinates {(0cm,0.8em)};},
},
}
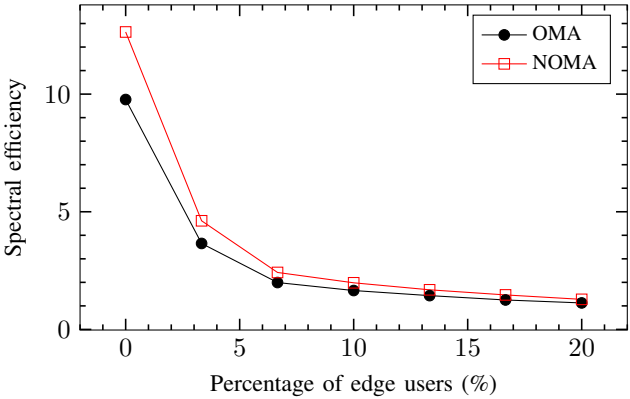
\begin{figure}[!ht] 
\centering 
\begin{tikzpicture}
\begin{axis}[
	xlabel={Percentage of edge users (\%)},
	ylabel={Spectral efficiency},
	label style = {font=\fontsize{9pt}{10pt}\selectfont},
	legend cell align={left},
	legend pos = north east,
	legend style = {font=\fontsize{8pt}{10pt}\selectfont},
	axis background/.style={fill=white},
minor x tick num=4,
minor y tick num=4,
major tick length=0.15cm,
minor tick length=0.075cm,
tick style={semithick,color=black},
	height=0.667\linewidth,
	width=\linewidth,
]
	
\addplot [ mark=*, color=black] coordinates {
	(0, 9.76979)
	(3.33, 3.65346)
	(6.66, 1.99219)
	(10.00, 1.6525)
	(13.33, 1.43616)
	(16.66, 1.25081)
	(20.00, 1.12557)
};

\addplot [ mark= square, color=red] coordinates{
	(0, 12.6405)
	(3.33, 4.61649)
	(6.66, 2.42073)
	(10.00, 1.98105)
	(13.33, 1.68339)
	(16.66, 1.46594)
	(20.00, 1.27779)
};

\legend{OMA, NOMA}
\end{axis}
\end{tikzpicture}
\caption{This figure illustrates the spectral efficiency as function of the proportion of of cell edge users.}
\label{fig:spec} 
\end{figure}

In \figurename~\ref{fig:convergence}, we show the convergence as well
as the convergence rate of fixed-point iterations of function
$\bm{f}(\bm{\rho})$. The theoretical guarantee of convergence, as stated earlier in
this note, is indeed observed. Moreover, high accuracy of the network
load can be reached after very few iterations on $\bm{f}(\bm{\rho})$,
and the algorithm converges slightly faster for higher user
demand.


\pgfplotsset{compat=1.12,
        /pgfplots/ybar legend/.style={
        /pgfplots/legend image code/.code={%
        \draw[##1,/tikz/.cd,bar width=3pt,yshift=-0.2em,bar shift=0pt]
                plot coordinates {(0cm,0.8em)};},
},
}
\begin{figure}[!ht]  
\begin{tikzpicture}
\begin{axis}[
	ymode = log,
	xlabel={Iteration $k$},
	ylabel={$\norm{\bm{\rho}^{(k)}-\bm{\rho}^{(k-1)}}_{\infty}$},
	label style = {font=\fontsize{9pt}{10pt}\selectfont},
	legend cell align={left},
	legend pos = north east,
	legend style = {font=\fontsize{8pt}{10pt}\selectfont},
	axis background/.style={fill=white},
minor x tick num=0,
minor y tick num=4,
major tick length=0.15cm,
minor tick length=0.075cm,
xtick = {2,3,4,...,20},
tick style={semithick,color=black},
	height=0.667\linewidth,
	width=0.97\linewidth,
	xmin=2,
	xmax=15,
]
	
\addplot [color=black,  mark=*] coordinates {
	(2, 0.0577461)
	(3, 0.00203188)
	(4, 0.0000852924)
	(5, 3.65109*10^-6)
	(6, 1.36205*10^-7)
	(7, 6.07099*10^-9)
	(8, 2.07329*10^-10)
	(9, 8.0064*10^-12)
	(10, 3.09183*10^-13)
	(11, 1.19397*10^-14)
	(12, 4.61075*10^-16)
	(13, 1.78053*10^-17)
	(14, 6.87588*10^-19)
	(15, 2.65526*10^-20)
	(16, 1.02538*10^-21)
	(17, 3.95971*10^-23)
	(18, 1.52912*10^-24)
	(19, 5.90499*10^-26)
	(20, 2.28033*10^-27)
};

\addplot [color=blue,  mark=square] coordinates {
	(2, 0.0496822)
	(3, 0.00239899)
	(4, 0.000124095)
	(5, 6.43253*10^-6)
	(6, 3.34073*10^-7)
	(7, 1.73539*10^-8)
	(8, 5.88846*10^-10)
	(9, 2.60267*10^-11)
	(10, 1.15037*10^-12)
	(11, 5.08458*10^-14)
	(12, 2.24736*10^-15)
	(13, 9.93322*10^-17)
	(14, 4.39044*10^-18)
	(15, 1.94055*10^-19)
	(16, 8.57716*10^-21)
	(17, 3.79106*10^-22)
	(18, 1.67563*10^-23)
	(19, 7.40622*10^-25)
	(20, 3.27351*10^-26)
};

\addplot [color=red,  mark=o] coordinates {
	(2, 0.0241175)
	(3, 0.00151652)
	(4, 0.000109303)
	(5, 8.59927*10^-6)
	(6, 5.46573*10^-7)
	(7, 2.7344*10^-8)
	(8, 1.41131*10^-9)
	(9, 7.62465*10^-11)
	(10, 5.11924*10^-12)
	(11, 2.22543*10^-13)
	(12, 1.2023*10^-14)
	(13, 6.49545*10^-16)
	(14, 3.50919*10^-17)
	(15, 1.89585*10^-18)
	(16, 1.02424*10^-19)
	(17, 5.53348*10^-21)
	(18, 2.98948*10^-22)
	(19, 1.61508*10^-23)
	(20, 8.72551*10^-25)
};
\legend{$d=1.5$ Mbps, $d=0.75$ Mbps, $d=0.15$ Mbps}
\end{axis}
\end{tikzpicture}
\caption{This figure shows the norm $\norm{\cdot}_{\infty}$ in function of iteration $k$ ($k\geq 2$), 
under the uniform demand settings of $d=1.5,~0.75,~\text{and}~0.15$~Mbps for each user, respectively, with 30 
users per cell.}
\label{fig:convergence}
\end{figure}
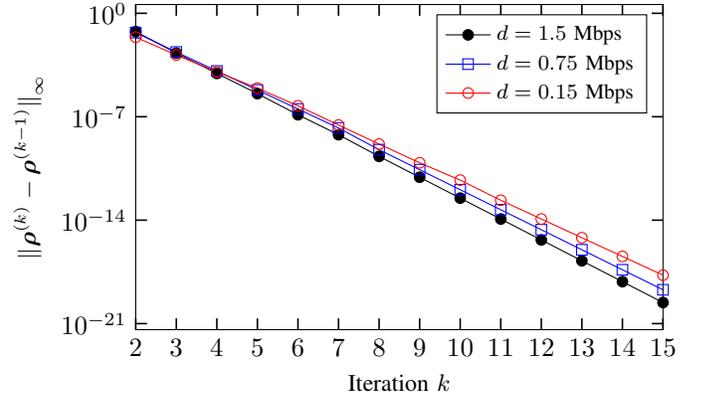

To summarize, NOMA provides considerable throughput gain over OMA, the
underlying reason is that, for the same demand level, NOMA requires
significantly less resource usage in all cells, and the spare resource
can be utilized for more throughput. For the same reason, NOMA
exhibits signficant improvement in the number of users that can be
supported.  However, the amount of performance improvement to expect
is tied to the proportion of cell-edge users. The simulation study
also confirms the correctness of our theoretical analysis of
convergence guarantee of the fixed-point algorithm.

\section{Conclusion}
\label{sec:conclusion}
This technical note has addressed the convergence and optimality of an
algorithmic framework for solving a class of optimization
problems in multi-cell NOMA networks. The note proved that the correct
decoding order corresponds to the largest region. Then, results for
convergence and optimality, with variable decoding order in the
iterative process, are formally established. The note has also
discussed the tractability of multi-cell optimization with
load coupling, and reveals that solving the single-cell problem is the
key in terms of tractability.

We remark that for contraction mapping, the convergence rate of
fixed-point iterations is linear, see~\cite{lemmens_nussbaum_2012}.
However for general SIFs the convergence can be sub-linear.  Hence,
an interesting topic of further study is to examine conditions under
which the SIF for NOMA optimization falls within the domain of
contraction mapping, to shed light on the convergence rate in addition
to the convergence results proved in the current paper.

As was discussed in Section~\ref{sec:introduction}, in NOMA it is
assumed that the data rate used for a user equals what is permitted by the
SINR. Suppose the rate is also subject to selection, such that it can
be set to be lower than the SINR-rate, in order to enable interference
cancellation. Although such a scheme is not part of the original NOMA,
combining this aspect with user grouping and power split of NOMA opens
up a new research line for future work.

A further line of our future research consists in accounting for the
impact of imperfect channel estimation on NOMA performance, as well as
means mitigating the problem of propagation error. The study amounts to
incorporating these aspects in the system model, and investigating the
resulting multi-cell NOMA optimization problem, for which the results
of the current paper will be used for benchmarking purposes.

\section*{Acknowledgment}

We would like to thank the reviewers and the editor for the valuable
comments that have enabled us to improve the paper, as well as
the inspirations thanks to the comments on future work (e.g.,
combining rate selection with NOMA is inspired
by a comment of Reviewer~3.)

\bibliographystyle{IEEEtran}
\bibliography{ref}

\end{document}